\documentclass[11pt]{article}

\usepackage[utf8]{inputenc}
\usepackage[T1]{fontenc}
\usepackage[a4paper, margin=3cm]{geometry}
\usepackage{amsmath, amssymb, amsthm}
\usepackage{bbm}
\usepackage[disable]{todonotes}
\usepackage{hyperref}
\usepackage[thmtools-compat]{keytheorems}

\theoremstyle{plain}
\newtheorem{theorem}{Theorem}[section]

\newtheorem{proposition}[theorem]{Proposition}

\newtheorem{conjecture}[theorem]{Conjecture}
\newtheorem{observation}[theorem]{Observation}

\theoremstyle{definition}
\newtheorem{definition}[theorem]{Definition}
\newtheorem{example}[theorem]{Example}

\theoremstyle{remark}

\newcommand{\BibTeX}{\rm B\kern-.05em{\sc i\kern-.025em b}\kern-.08em\TeX}

\DeclareMathOperator*{\argmax}{arg\,max}

\newcommand{\agents}{\mathcal{N}}
\newcommand{\items}{\mathcal{G}}
\newcommand{\allocations}{\mathcal{A}}
\newcommand{\bid}{\mathbbm{r}}
\newcommand{\indbid}{r}

\newcommand{\bids}{\mathcal{L}(\items)}

\title{On the Convergence of Iterated Round Robin}

\author{
  Maxime Lucet \\ \small LIP6, CNRS, Sorbonne Université, Paris, France \\ \small\texttt{maxime.lucet@lip6.fr}
  \and
  Henri Surugue \\ \small LIP6, CNRS, Sorbonne Université, Paris, France \\ \small\texttt{henri.surugue@lip6.fr}
  \and
  Nawal Benabbou \\ \small LIP6, CNRS, Sorbonne Université, Paris, France \\ \small\texttt{nawal.benabbou@lip6.fr}
  \and
  Aur\'elie Beynier \\ \small LIP6, CNRS, Sorbonne Université, Paris, France \\ \small\texttt{aurelie.beynier@lip6.fr}
  \and
  Nicolas Maudet \\ \small LIP6, CNRS, Sorbonne Université, Paris, France \\ \small\texttt{nicolas.maudet@lip6.fr}
  \and
  Arianna Novaro \\ \small Université Paris 1 Panthéon-Sorbonne, CNRS, CES, Paris, France \\ \small\texttt{arianna.novaro@univ-paris1.fr}
}

\date{}

\begin{document}

\maketitle

\begin{abstract}
Round Robin (RR) is a simple and widely used picking sequence mechanism for allocating indivisible goods, known to guarantee envy-freeness up to one item (EF1) when preferences are additive. Despite being manipulable (meaning that agents may have an incentive to strategically misreport their preferences), it is known to always admit at least one pure Nash equilibrium. In this paper, we study an iterated variant of RR, in which agents observe the outcome of the mechanism at each iteration and may revise their reported preferences through a best-response dynamics.

We first consider a full-information setting, where agents observe both the outcome and the reported preferences of all the agents. In this setting, we show that cycles can occur even under strong restrictions on the manipulations carried out by the agents. We nonetheless identify a natural restriction (\emph{item-preserving swaps}), under which convergence to a pure Nash equilibrium is guaranteed to exist for any number of agents and items. We also show that restricting to \emph{adjacent swaps} guarantees convergence for two agents and a limited number of items, leaving the question open for unbounded number of items.
We then turn to a partial-information setting, in which agents only observe the resulting allocation and must then infer the preference profiles of the others assuming consistency with the RR sequence. We study three behaviours guiding the agents when deciding how to manipulate under uncertainty on the other agents' preferences. We show that two of the studied behaviours, despite being quite natural, may cycle while a more cautious rule guarantees iterated RR to be strategyproof.
\end{abstract}

\section{Introduction}

Picking sequences \cite{BL-IJCAI11} are simple mechanisms designed to allocate indivisible goods to agents, a central topic in multiagent systems \cite{AMANATIDIS2023103965}: following a pre-defined sequence, agents are selected in turn, possibly multiple times, to pick their preferred object among the remaining ones, until no item remains. It has low communication requirement, as it only requires agents to rank the items \cite{LangR16}. We make the common assumption that preferences are provided upfront, and that the mechanism then computes the allocation without further interacting with the agents.  
Unfortunately, it is well-known that, apart from a very restrictive subclass of sequences (essentially extensions of serial dictatorships \cite{MBGL-IJCAI25}), manipulation can occur. For instance, if an item that I value highly is poorly valued by all other agents, I may not need to reveal my true preference to obtain it, since no one will compete for it. 
I may instead misreport my preferences so as to secure another desirable item that would otherwise be taken by another agent. The complexity of finding such manipulations has been studied, see {\it e.g.}  \cite{Kalinowski2013,Aziz2017}. 

Round Robin (RR) is a specific picking sequence in which agents are repeatedly called in the same order until exhaustion of objects. Despite its  simplicity, it has remarkable properties when agents' preferences are additive: it guaranties envy-freeness up-to-one-item (EF1) \cite{Caragiannis_et_al2019}, it provides some form of efficiency (even if weaker than Pareto-optimality \cite{beynier:hal-02132742}), and it is known to always have pure Nash equilibria (PNE) when agents behave strategically (that is, misreport their preferences over items, as discussed above) \cite{Aziz_et_al2017}. Furthermore, it preserves this EF1  guarantee in PNE \emph{with respect to the true values} of the items \cite{AmanatidisBFLLR24}. 
This is a striking result and a strong motivation to investigate further the behaviour of RR when used by strategic agents. In particular, it is not clearly understood under what circumstances convergence to a PNE may occur.

In this paper, we study an iterated version of RR, called Iterated Round Robin (IRR), and identify conditions under which it converges to a PNE. Such a mechanism may be useful when allocating resources within an organization, where agents may wish to revise their announced preferences after observing the resulting allocation. For instance, when allocating time slots for access to shared facilities among research teams, teams could be allowed to successively revise their preferences in response to the current proposed allocation, until no one wishes to change. More generally, such iterative procedures can be viewed as a way of reaching a form of stability, in the sense that the process stops when no agent wishes to revise her report. Note that items are only actually allocated at the very end of this process (this differs from another line of research studying \emph{repeated} allocations, see {\it e.g} \cite{Igarashi2024}). 
A related model, known as \emph{iterated voting}, has been studied: in this setting, voters repeatedly revise their ballots and a central question is whether this process converges to a stable outcome \cite{MeirPRJ10,LevR16,MeirPRJ17,TerzopoulouEtAlAIJ2024}.

When defining such iterated mechanisms, it is important to specify (i) the \emph{dynamics} of the process (which agents can react, and when), (ii) the \emph{manipulations} agents are able to perform, and (iii) the \emph{level of information} available to them when making their decisions. For (i), we adopt standard best-response dynamics, where agents update their bids individually and sequentially.
Regarding (ii), we consider both the unrestricted case and several cognitively plausible restrictions on how agents can modify their previous bid at each turn. Finally, for (iii), we assume throughout the paper that the mechanism itself, i.e., the exact picking sequence, is known to all agents, and consider two levels of information. Under full information, agents observe the full bid profile and hence the resulting allocation. Under partial information, they observe only the resulting allocation and can therefore only infer an information set of possible bid profiles consistent with it. It has already been observed that under unrestricted manipulations in the full information setting, cycles may occur under a best response dynamics \cite{Aziz_et_al2017}.

Our paper provides a more detailed and nuanced picture. In the full-information setting (Section~\ref{section: full info}), we show that cycles may occur even under severe restrictions on manipulations, but also identify non-trivial cases guaranteeing convergence to PNE. When manipulations are restricted, the set of potential PNE is enlarged, so the guarantee that ``PNE are EF1 with respect to the true valuations'' no longer necessarily holds; indeed, we show that it may fail.
In the partial-information setting (Section~\ref{section: partial info}), we distinguish different behaviours under uncertainty about the preferences of the other agents. We show that some provide strategyproofness guarantees, while others may lead to manipulations and cycles.
We start by describing the formal model of iterated picking-sequence mechanisms.

\section{Model}
\label{sec:model}

In this paper, we study an iterated fair allocation problem. We first introduce the static setting and useful notations. 

\vspace{0.1cm}
\noindent \textbf{Static setting.} An instance is described as a tuple $\langle \agents, \items, \succ, v \rangle$, where $\agents = \{1, \ldots, n\}$ is a set of $n$ agents, $\items = \{g_1, \ldots, g_m\}$ is a set of $m$ items, $\succ = (\succ_1, \ldots, \succ_n)$ defines, for each agent $i \in \agents$, a strict linear order (ranking) $\succ_i$ over the set of items $\items$, and $v = (v_1,\ldots,v_n)$ is a utility profile. More precisely, we assume that agent $i$'s preferences can also be represented by any additive cardinal utility function $v_i:2^\items \rightarrow \mathbb{R}_{\geq 0}$ consistent with $\succ_i$. That is, $v_i(\{g\}) > v_i(\{g'\}) \Leftrightarrow g \succ_i g'$ for all $g,g' \in \items$, and $v_i(S) = \sum_{g \in S} v_i(\{g\})$ for any bundle $S \subseteq \items$. This preference representation is equivalent to the \emph{responsive set extension} \cite{Barbera_et_al2004}. 

Let $\bids$ denote the set of strict linear orders over the items in $\items$. Throughout the paper, for brevity, we may represent a complete ranking by listing the items from most to least preferred and omitting the preference symbol. Thus, $g_{k_1}g_{k_2}\ldots g_{k_m}$ denotes
$g_{k_1} \succ_i g_{k_2} \succ_i \ldots \succ_i g_{k_m}$.
We also use capital letters to denote possibly empty sub-rankings, with juxtaposition denoting concatenation. For instance, $LgMg'R$ denotes a ranking in which the items in $L$ appear first, followed by $g$, then the items in $M$, followed by $g'$, and finally the items in $R$.

Agents are assumed to reveal their preferences by announcing a ranking over the items (potentially untruthful). Formally, we define a bid vector $\bid = (\bid_1, \ldots, \bid_n)$ where $\bid_i \in \bids$ for any agent $i \in \agents$. Given an agent $i \in \agents$, we note $\bid_{-i} \in \bids^{n-1}$ the preferences of the other agents, i.e. $\bid_{-i} = (\bid_1, \ldots, \bid_{i-1}, \bid_{i+1}, \ldots, \bid_n)$.
We are interested in mechanisms that distribute items among agents given the bid vector $\bid$. Formally, $A: \agents \rightarrow 2^{\items}$ is a complete allocation if and only if the following conditions hold: $\cup_{i \in \agents} A(i) = \items$ and $A(i) \cap A(j) = \emptyset$ for any $i, j \in \agents$, $i \neq j$. Let $\allocations$ denote the set of complete allocations.

In this paper, we focus on picking sequence mechanisms: given a sequence $\pi \in \agents^m$ of length $m$ specifying which agent picks at each step, agents pick their preferred items among those remaining. Given a sequence $\pi \in \agents^m$, we denote such a picking sequence $\mathrm{PS}_{\pi}: \bids^n \rightarrow \allocations$.
A round-robin sequence is a sequence that repeatedly cycles through the agents in a fixed order. Whenever we consider such a sequence, we assume, up to a renaming of the agents, that $\pi = (1,2,\ldots,n,1,2,\ldots)$, and we simply write $\mathrm{RR}$ instead of $\mathrm{PS}_{\pi}$ for the corresponding picking-sequence mechanism. All results in the paper are established for the round-robin setting. 

Finally, we define concepts related to agents' strategic behaviour. While they are usually defined with respect to true preferences, we define them with respect to announced preferences, which may differ from the truthful ones, as needed for the iterated setting.

\begin{definition}[strictly profitable manipulation]
    Given an instance $\langle \agents, \items, \succ, v \rangle$, a sequence $\pi  \in  \agents^m$, and a bid vector $\bid \in \bids^n$, $\indbid  \in  \bids$ is a \emph{strictly profitable manipulation} for agent $i  \in  \agents$ if $$v_i(A'(i)) > v_i(A(i))$$ where $A := \mathrm{PS}_{\pi}(\bid)$ and $A' := \mathrm{PS}_{\pi}(\indbid, \bid_{-i})$.
\end{definition}

\begin{definition}[best response]
     Given an instance $\langle \agents, \items, \succ, v \rangle$, a sequence $\pi \!  \in \!  \agents^m$, and a bid vector $\bid \! \in \! \bids^n$, $\indbid$ is a \emph{best response} to $\bid_{-i}$ for agent $i \! \in \! \agents$ if for any alternative bid $\indbid'\! \in \! \bids$, we have: $$v_i(A(i)) \geq v_i(A'(i))$$ where  $A := \mathrm{PS}_{\pi}(\indbid, \bid_{-i})$ and $A' := \mathrm{PS}_{\pi}(\indbid', \bid_{-i})$.
\end{definition}

\begin{definition}[pure Nash equilibrium]  Given an instance $\langle \agents,\items,$ $ \succ, v \rangle$ and a sequence $\pi  \in   \agents^m$, a bid vector $\bid  \in  \bids^n$ is a \emph{pure Nash equilibrium} (PNE) if, for any $i \in \agents$, $\bid_i$ is a best response to $\bid_{-i}$.
\end{definition}

\begin{definition}[strategyproof] For a given sequence $\pi \in \agents^m$, $\mathrm{PS}_{\pi}$ is said to be strategyproof if, for any instance $\langle \agents, \items, \succ, v \rangle$, the truthful bid vector $\bid =\:\succ$ is a PNE.
\end{definition}

We now recall a first result from \cite{BL-IJCAI11} for the case of identical preferences (i.e., $\succ_i =\: \succ_j$ for all $i,j \in \agents$).

\begin{theorem}[from \cite{BL-IJCAI11}]
For any instance $\langle \agents, \items, \succ, v \rangle$ with identical preferences, $\mathrm{PS}_{\pi}$ is strategyproof for any sequence $\pi \in \agents^m$.
\end{theorem}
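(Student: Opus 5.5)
We need to show that when all agents share the ranking $\succ$ and everyone else bids truthfully, agent $i$ cannot gain by misreporting. Since $v_i$ is additive and consistent with $\succ$, we compare the truthful bundle $T$ with any manipulated bundle $M = \mathrm{PS}_{\pi}(\indbid, \succ_{-i})(i)$ through a rank-by-rank domination argument. Let $t_1 \succ t_2 \succ \dots \succ t_k$ be the items of $T$ and $s_1 \succ \dots \succ s_k$ those of $M$. Both bundles have the same size $k$, which is the number of occurrences of $i$ in $\pi$. If we show $t_j \succeq s_j$ for every $j$, additivity and nonnegativity give $v_i(T) \ge v_i(M)$.

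First we describe the truthful outcome. Under identical truthful reports, at each step the picker takes the $\succ$-best remaining item. So the item picked at step $p$ is exactly the $p$-th item $g_{(p)}$ of the common ranking. Hence $t_j = g_{(p_j)}$, where $p_j$ is the position of the $j$-th occurrence of $i$ in $\pi$.

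Next we bound the manipulated outcome. Fix any bid $\indbid$ of agent $i$. Every other agent still picks the $\succ$-best remaining item. The key invariant, proved by induction on the step $p$, is that after $p$ steps the set of removed items is never "better" than the prefix $\{g_{(1)},\dots,g_{(p)}\}$. More precisely, for each $q$, the number of removed items among the top $q$ is at most $q$ (trivially), and the other agents always remove the top remaining item. What we actually need is this: at agent $i$'s $j$-th pick (step $p_j$), the items $i$ has already taken are $j-1$ in number, and the remaining items she could take are limited. We count the items of $M$ in the top $p_j - 1$ positions of $\succ$. Since agent $i$ has had only $j$ picks up to and including step $p_j$, we want to show $|M \cap \{g_{(1)},\dots,g_{(p_j)}\}| \le j$. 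This holds for all $j$ and is equivalent to $s_j \preceq g_{(p_j)} = t_j$.

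The main obstacle is this last counting claim, because $i$ picking a low item early could leave top items available at later steps. Those items, however, are grabbed by the others first. Here is the argument I would try. Consider an item $g_{(q)}$ with $q \le p_j$ that agent $i$ picks at her $\ell$-th turn with $\ell > j$, so at step $p_\ell > p_j$. Then $g_{(q)}$ survived all of the other agents' picks up to step $p_\ell$. Each such pick takes the best remaining item. By step $p_j$ exactly $p_j$ items are removed in total, of which at most $j$ are $i$'s. I would show inductively that after any step $p$, the removed set, ranked in $\succ$ order, contains at least as many items from each top-$q$ prefix as "not taken by $i$ yet available" forces. The simpler route is a direct swap argument: the other agents' picks are greedy, so at every non-$i$ step the best remaining item is taken. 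Thus after step $p_j$, every item among the top $p_j$ that is not in $i$'s bundle is removed unless $i$ skipped over better items. Each such skipped item displaces one greedy pick only downward. From this we get that all of $\{g_{(1)},\dots,g_{(p_j)}\}$ is removed by step $p_j$ except possibly items compensated one-for-one by worse items $i$ took. Hence $i$ cannot later collect more than $j$ items from that prefix. Combining this with the first two steps and additivity concludes the proof.
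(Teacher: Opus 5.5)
The paper does not prove this statement (it is imported from prior work), so there is no in-paper proof to compare against. Your overall strategy is the standard one and is sound: sort both bundles and show $t_j \succeq s_j$ for every $j$, using that the $p$-th truthful pick is $g_{(p)}$. However, the reduction at its core is mis-indexed, and the counting argument you sketch targets the wrong inequality.

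Since $s_j$ is the $j$-th best item of $M$, we have $s_j \preceq g_{(p_j)}$ if and only if fewer than $j$ items of $M$ are strictly better than $g_{(p_j)}$, i.e.
\[
|M\cap\{g_{(1)},\dots,g_{(p_j-1)}\}|\le j-1 .
\]
The inequality you set out to prove, $|M\cap\{g_{(1)},\dots,g_{(p_j)}\}|\le j$, is strictly weaker and is \emph{not} equivalent to $s_j\preceq t_j$. For $j=k$ it is vacuous, since $|M|=k$. For a concrete case, take $k=1$ and $p_1=2$: the set $M=\{g_{(1)}\}$ satisfies your inequality yet beats $T=\{g_{(2)}\}$. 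Such an $M$ cannot actually arise, but your inequality does not exclude it. Your ``compensation'' argument, once made precise, establishes exactly this weak bound at threshold $p_j$. So the chain of implications as written does not reach $v_i(T)\ge v_i(M)$, and in any case that step is only sketched (``I would try'', ``I would show inductively'').

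The repair is short and uses only your own observation that the other agents are greedy. For any $q$, let $c_i(q)$ be the number of turns of $i$ among steps $1,\dots,q$, and let $\mathrm{top}(q)=\{g_{(1)},\dots,g_{(q)}\}$.
\begin{itemize}
\item Before each of the first $q$ steps, fewer than $q$ items have been removed, so some item of $\mathrm{top}(q)$ is still available.
\item Every other agent bids $\succ$, so at her turn she picks such an item.
\item Hence, whatever bid $\indbid$ agent $i$ uses, the other agents collect at least $q-c_i(q)$ items of $\mathrm{top}(q)$ during the first $q$ steps, and so $|M\cap\mathrm{top}(q)|\le c_i(q)$.
\end{itemize}
With $q=p_j-1$ one has $c_i(q)=j-1$, which is exactly the needed bound. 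This gives $t_j\succeq s_j$ for all $j$, and additivity concludes.

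This also removes your worry about top items that $i$ leaves behind. Only the first $q$ steps matter; what happens afterwards is irrelevant to the bound.
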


Moreover, \cite{Aziz_et_al2017} showed that, in the general case where preferences may differ across agents, a PNE is guaranteed to exist. The equilibrium they construct is called the \emph{bluff profile}.

\begin{definition}[bluff profile]
   Consider an instance $\langle \agents, \items, \succ, v \rangle$ and a sequence $\pi \in \agents^m$. Let $\rho = (\rho_1, \ldots, \rho_m)$ denote the order in which items are selected under $\mathrm{PS}_{\pi}$  when $\bid =\: \succ$. The \emph{bluff ranking}, denoted by $\succ_{\mathrm{br}}$, is the preference order over $\items$ defined by
\[
    \rho_1 \succ_{\mathrm{br}} \rho_2 \succ_{\mathrm{br}} \dots \succ_{\mathrm{br}} \rho_m.
\]
The \emph{bluff profile} is the bid vector in which  every agent reports the bluff ranking, i.e.,
$\bid_i =\: \succ_{\mathrm{br}}$ for all $i \in \agents$.
\end{definition}

\begin{theorem}[from \cite{Aziz_et_al2017}]
    For any instance $\langle \agents, \items, \succ, v \rangle$ and any sequence $\pi \! \in \! \agents^m$, $\bid$ is a PNE if $\bid$ is the bluff profile.
\end{theorem}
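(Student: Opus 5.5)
Under the bluff profile everyone reports $\succ_{\mathrm{br}}$, so the picking sequence allocates items exactly in the order $\rho_1,\rho_2,\dots,\rho_m$. Since $\rho$ is the truthful selection order, this means the bluff profile produces the same allocation $A$ as truthful bidding. I will fix an agent $i$ and an arbitrary deviation $\indbid$, and compare agent $i$'s bundle $A'(i)$ under $(\indbid,\bid_{-i})$ with $A(i)$.

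\textbf{Step 1: the deviation as a relabelled sequential process.} All other agents report $\succ_{\mathrm{br}}$, so whenever they pick they take the $\rho$-earliest remaining item. Write agent $i$'s turns in $\pi$ as $t_1<\dots<t_k$, and let $\rho_{t_1},\dots,\rho_{t_k}$ be her truthful picks, so $A(i)=\{\rho_{t_1},\dots,\rho_{t_k}\}$. The point of fixing the others on $\succ_{\mathrm{br}}$ is that the only possible change is the following: when $i$ takes an item $x\neq\rho_{t_j}$, later in the process some other agent takes a $\rho$-earlier item that $i$ left behind. In particular, any item $\rho_s$ with $s\le t_j$ that is still available at $i$'s $j$-th turn in the deviation is taken by others, or by $i$, before too long.

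\textbf{Step 2: the key invariant.} I will show by induction on $j$ that at $i$'s $j$-th turn in the deviation, the set of remaining items is contained in $\{\rho_s : s\ge t_j\}\cup D_j$. Here $D_j$ consists of items that $i$'s truthful run would also have had available. More usefully, I will show that $A'(i)$ can be matched to $A(i)$ by an injection $\phi$ with $\phi(x)=\rho_{t_j}$ such that $v_i(\rho_{t_j})\ge v_i(x)$.

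The intended argument is a counting one. Suppose the $j$-th best item of $A'(i)$ under $\succ_i$ were strictly better than the $j$-th best item of $A(i)$. Consider the first $j$ items of $A'(i)$ in $\succ_i$ order; each is better than $\rho_{t_j}$'s counterpart. In the truthful run, $i$ at turn $t_j$ chose her favourite remaining item. So every item she strictly prefers to her $j$-th truthful pick, other than her earlier picks, had already been taken by time $t_j$ in $\rho$, that is, by other agents before $t_j$. Under the deviation, those same other agents, who use $\succ_{\mathrm{br}}$ and pick at the same times, take items $\rho$-no-later. Hence at most $j-1$ of those desirable items can survive for $i$, which contradicts the assumption.

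\textbf{Step 3: conclusion and main obstacle.} Step 2 gives $v_i(A'(i))\le v_i(A(i))$ via the responsive (additive) extension, so $\succ_{\mathrm{br}}$ is a best response for each $i$ and the bluff profile is a PNE. The main obstacle is making the claim in Step 2 rigorous: I need that, under the deviation, the items taken by the other agents before $i$'s $j$-th turn are at least as $\rho$-early as in the truthful run. This is an exchange/monotonicity lemma. Removing items from the pool, or changing which items $i$ takes, can only push others' bluff picks to items that are $\rho$-earlier or equal, in a pointwise-dominance sense, and never later. I would prove it by induction over the picking steps, comparing the two runs position by position using the fact that every other agent's choice is simply the $\rho$-minimum of the remaining set.
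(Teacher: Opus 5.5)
The paper gives no proof of this statement; it imports it from Aziz et al. Where it later needs a strict version (Proposition~\ref{prop: no manip} and the cautious-agent theorem), it merges all agents other than $i$, who share one ranking, into a single representative agent and invokes Bouveret and Lang's uniqueness of the optimal bundle. Your route is a direct counting argument, and it is correct once Step~2 is stated precisely.

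The step you call the main obstacle needs no induction. At turn $s$ of any run exactly $m-s+1$ items remain, while only $m-s$ items have $\rho$-position larger than $s$. So an agent reporting $\succ_{\mathrm{br}}$ at turn $s$ takes some $\rho_{s'}$ with $s'\le s$. Hence all $t_j-j$ picks made by other agents before turn $t_j$ lie in the prefix $P_j=\{\rho_1,\dots,\rho_{t_j-1}\}$, which gives $|A'(i)\cap P_j|\le (t_j-1)-(t_j-j)=j-1$. Moreover, every $x\succ_i\rho_{t_j}$ lies in $P_j$, because $i$'s truthful pick at $t_j$ is her favourite among $\{\rho_{t_j},\dots,\rho_m\}$.

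The count must be made over this whole prefix, equivalently over all items preferred to $\rho_{t_j}$, including $\rho_{t_1},\dots,\rho_{t_{j-1}}$. It cannot be made over ``those same'' items the others took truthfully. Under a deviation the others may take $\rho_{t_1}$ instead of one of them, so only the cardinality bound survives. Two further fixes: add the observation $\rho_{t_1}\succ_i\cdots\succ_i\rho_{t_k}$, so that $\rho_{t_j}$ really is the $j$-th best item of $A(i)$; and drop the $D_j$ invariant, which plays no role.

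What your approach buys:
\begin{itemize}
\item It is self-contained and works for every sequence $\pi$.
\item It holds against any picking behaviour of $i$, not only a fixed reported ranking.
\item It yields strictness for free: if $A'(i)\neq A(i)$, some sorted position is strictly worse, so $v_i(A'(i))<v_i(A(i))$ for every consistent $v_i$.
\end{itemize}
Run verbatim against the bluff world, it proves Proposition~\ref{prop: no manip} with exactly the strict inequality the cautious-agent theorem needs. It thereby bypasses both the representative-agent reduction and the appeal to Bouveret--Lang. The paper's route is shorter on the page but delegates all of the combinatorics to cited results.
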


The Round Robin is famous for being a simple algorithm fulfilling the following popular notion of fairness. 

\begin{definition}[Envy-freeness up to one good]
    Given an instance $\langle \agents, \items, \succ, v \rangle$, an allocation $A \in \allocations$ is \emph{envy-free up to one good (EF1)} if, for any two agents $i,j \in \agents$, either
    $v_i(A(i)) \geq v_i(A(j))$
    or there exists $g \in A(j)$ such that
    $v_i(A(i)) \geq v_i(A(j) \setminus \{g\})$.
\end{definition}

It is well known that, when provided with the agents' true preferences, Round Robin returns an EF1 allocation; that is, $A := \mathrm{RR}(\succ)$ is EF1 \cite{Caragiannis_et_al2019}. Moreover, as mentioned in the introduction, one appealing property of Round Robin is that every PNE induces an allocation that is EF1 with respect to the agents' true preferences. In other words, even if agents misreport their preferences, as long as the bid vector is a PNE, the resulting allocation is EF1 with respect to the true preferences \cite{AmanatidisBFLLR24}.

\vspace{0.1cm}
\noindent \textbf{Iterated setting.} We study an iterated version of the static setting, which allows agents to manipulate sequentially. More precisely, we start from the truthful bid profile $\bid^0=\:\succ$. At each iteration $t$, the picking-sequence mechanism constructs a complete allocation $A^t$ from $\bid^t$, and one agent $i \in \agents$ is chosen arbitrarily among those who have a strictly profitable manipulation given $\bid^t$. 
This agent is then allowed to change her bid: $\bid^{t+1}_i$ is a best response to $\bid^t_{-i}$, while $\bid^{t+1}_{-i} = \bid^t_{-i}$. The process is then repeated as long as some agent has a strictly profitable manipulation. In the sequel, we denote this iterated version of $\mathrm{PS}_{\pi}$ by $\mathrm{IPS}_{\pi}$, and by $\mathrm{IRR}$ in the round-robin setting. Moreover, we say that $\mathrm{IPS}_{\pi}$ is strategyproof if, for any instance $\langle \agents, \items, \succ, v \rangle$, $\bid^0=\:\succ$ is a PNE. In this case, no agent has a strictly profitable manipulation and $\mathrm{IPS}_{\pi}$ terminates after the initial iteration.

Note that agents are assumed to be myopic, i.e., they do not take previous iterations into account and base their manipulation solely on the current bid vector and the allocation returned at the current iteration. We also assume  that the initial bid vector is truthful, i.e., $\bid^0=\:\succ$, a common assumption in the iterated voting literature \cite{MeirPRJ17, Meir2015, Meir2018}. 
Our goal is to study whether strategic deviations starting from truthful reports eventually converge to a PNE.  

In this paper, we study the convergence of $\mathrm{IRR}$, that is, whether the dynamics eventually reach a PNE rather than entering a cycle of manipulations. We first define a cycle.

\begin{definition}[cycle]
    Given a sequence $\pi \in \agents^m$, we say that $\mathrm{IPS}_{\pi}$ admits a cycle on an instance $\langle \agents, \items, \succ, v \rangle$ if there exists a sequence of bid vectors $(\bid^0, \ldots)$ generated by $\mathrm{IPS}_{\pi}$ such that $\bid^t = \bid^{t'}$ for some iterations $t<t'$.
\end{definition}

Our objective is to identify settings in which no execution of $\mathrm{IRR}$ can enter a cycle, i.e., settings under which $\mathrm{IRR}$ necessarily converges to a PNE. In this paper, we study this problem under two informational assumptions. In Section~\ref{section: full info}, we assume that agents have full information, i.e., they know the bids of all agents (and therefore the allocation returned by $\mathrm{RR}$) at each iteration. In contrast, in Section~\ref{section: partial info}, we assume that agents have only partial information: they observe only the allocation returned by $\mathrm{RR}$ at each iteration. To describe the dynamics of $\mathrm{IRR}$ under these informational settings, we use the notation
\[
\{\bid_1, \bid_2, \ldots\}[A(1):v_1(A(1)), A(2):v_2(A(2)), \ldots]
\]
to indicate that the agents report bids $\bid_i$, receive bundles $A(i)$, and obtain utilities $v_i(A(i))$, respectively. Moreover, we write $\overset{i}{\rightarrow}$ to denote a best-response manipulation by agent $i \in \agents$.
\section{Full Information} \label{section: full info}

In this section, unless stated otherwise, we focus on $\mathrm{IRR}$ and assume that agents have full information: they know the sequence and observe the agents' bids. The first natural question is whether $\mathrm{IRR}$ converges in general under full information. 

We start with a simple observation showing that  convergence can already be guaranteed in a very restricted case: 

\begin{definition}[potential manipulator]
    Given an instance $\langle \agents, \items, \succ, v \rangle$ and a sequence $\pi \in \agents^m$, agent $i \in \agents$ is called a \emph{potential manipulator} if she appears in non-adjacent positions in the sequence.
\end{definition}

\begin{observation} \label{observation: unique manipulator}
    Given an instance $\langle \agents, \items, \succ, v \rangle$ and a sequence $\pi \in \agents^m$, $\mathrm{IPS}_\pi$ converges to a PNE if there is a unique potential manipulator. In particular, $\mathrm{IRR}$ is guaranteed to converge when $m=n+1$.
\end{observation}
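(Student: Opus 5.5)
The argument rests on one claim: an agent who is not a potential manipulator can never have a strictly profitable manipulation, whatever the current bid vector is. Suppose agent $j$'s positions in $\pi$ form one contiguous block of $k$ picks. Whatever $j$ reports, the same set $R$ of items remains when her block starts, because the earlier picks depend only on $\bid_{-j}$. During the block she takes $k$ items of $R$ chosen by her bid. Reporting her true ranking $\succ_j$ gives her the $k$ best items of $R$ under $\succ_j$, which is optimal for any additive $v_j$. After her block she never picks again. So the picks of later agents do not matter to her, and $\succ_j$ is a best response to every $\bid_{-j}$.

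In the iterated dynamics, an agent who is not a potential manipulator starts truthful because $\bid^0=\succ$. By the claim she is never chosen to move, since she never has a strictly profitable manipulation. Hence she stays truthful, which is a best response throughout. If some agent $i$ in the profile differs from $\succ$ (for example $i$ appears in two blocks but also earlier), I still only need that non-manipulators never move.

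Now let $i$ be the unique potential manipulator. Every move in $\mathrm{IPS}_\pi$ is made by $i$, and the bids $\bid_{-i}=\succ_{-i}$ never change. After $i$'s first move, $\bid_i^{1}$ is a best response to the fixed $\bid_{-i}$. Every other agent's bid is still a best response by the claim. So $\bid^1$ is a PNE, and no agent has a strictly profitable manipulation. The process stops after at most one manipulation, so it converges and cannot cycle.

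For round robin with $m=n+1$, the sequence is $(1,2,\ldots,n,1)$. Agent $1$ appears at positions $1$ and $n+1$, which are non-adjacent when $n\ge2$. Every other agent appears once. When $n=1$ the single agent's positions are adjacent, so there are no potential manipulators, and the claim gives strategyproofness, which is trivial convergence. In both cases the first part applies.

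I expect no real obstacle. The only delicate step is the claim that a single contiguous block makes truthfulness optimal regardless of the others' bids. Its key point is that the items available at the start of the block are independent of $j$'s own report. The rest is bookkeeping.
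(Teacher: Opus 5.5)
Your proof is correct, and it is essentially the reasoning behind this observation, which the paper states without proof. An agent whose picks form a single contiguous block faces a set of remaining items that does not depend on her own report, so truthful reporting is a best response to every $\bid_{-j}$ and she never moves. Hence only the unique potential manipulator ever moves, and her first best response against the fixed $\bid_{-i}$ already yields a PNE; your check that $\pi=(1,2,\ldots,n,1)$ makes agent~$1$ the unique potential manipulator for $n\geq 2$ settles the round-robin case.
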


Unfortunately, \cite{Aziz_et_al2017} showed that $\mathrm{IRR}$ can indeed cycle. They exhibited a cycle with $n=2$ agents and $m=10$ items. We provide a shorter cycle on a smaller instance. By Observation~\ref{observation: unique manipulator}, our example is minimal in the number of items, with $m=n+2$.

\begin{proposition} \label{proposition: cycle no restriction}
    $\mathrm{IRR}$ can cycle even for $n\!=\!2$ agents and $m\!=\!4$ items.
\end{proposition}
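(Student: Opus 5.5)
We plan to exhibit an explicit instance and a best-response trajectory that returns to an earlier bid vector. With $n=2$ and $m=4$, the round-robin sequence is $\pi=(1,2,1,2)$. The allocation $\mathrm{RR}(\bid_1,\bid_2)$ is fully described as follows:
\begin{itemize}
\item agent~$1$ takes $x$, the top item of $\bid_1$;
\item agent~$2$ takes $y$, the top item of $\bid_2$ among $\items\setminus\{x\}$;
\item agent~$1$ takes the item ranked higher in $\bid_1$ among the two remaining ones;
\item agent~$2$ receives the last item.
\end{itemize}
This gives a compact description of best responses. Against a fixed $\bid_2$, agent~$1$ can obtain exactly the bundles $\{x,z\}$ with $z\neq y(x)$, where $y(x)$ is agent~$2$'s top item outside $\{x\}$. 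Against a fixed $\bid_1$, the first pick $x$ is fixed. Agent~$2$ then chooses $y\neq x$ and receives, in addition, whichever of the two leftover items is ranked lower in $\bid_1$. Our first step is to tabulate these two best-response correspondences for four items $a,b,c,d$. The only information needed is the top item of $\bid_2$ outside each possible $x$, and the relative order in $\bid_1$ of each pair of items.

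Second, we look for a cycle of length four of the form
\[
\bid^0 \overset{i}{\rightarrow} \bid^1 \overset{j}{\rightarrow} \bid^2 \overset{i}{\rightarrow} \bid^3 \overset{j}{\rightarrow} \bid^4=\bid^{t}
\]
for some earlier $t$. The mechanism behind the cycle is the following. Agent~$2$ exploits the order of agent~$1$'s lower items, grabbing the item that agent~$1$ ``protects'' least. Agent~$1$ answers by reordering her bid so as to secure that item, or by changing her first pick. This in turn exposes another item to agent~$2$, and so on. Since these are best responses, each bid may be chosen among several optimal rankings. We will exploit this freedom so that the bid vector after one full round coincides exactly with an earlier one, not merely the allocation. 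This is the step where the definition of a cycle (equality of bid vectors) has to be met.

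Concretely, I would start by fixing agent~$1$'s true valuation with one dominant item, e.g.\ $v_1=(a{:}\,10,\ b{:}\,4,\ c{:}\,3,\ d{:}\,1)$. If agent~$1$ never changes $x=a$, agent~$2$'s relevant behaviour reduces to choosing among three options, and the dynamics risk settling quickly. So I expect a genuine cycle to require agent~$1$ to alternate her first pick between two items, say $a$ and $b$. Depending on agent~$2$'s current top items, taking $b$ first may let her still get $a$ second, or not. I would therefore choose agent~$2$'s values so that her preferred target flips depending on agent~$1$'s first pick. Then agent~$1$'s optimal first pick flips in turn depending on agent~$2$'s top. Finally, I would verify each step numerically: each move strictly improves the mover's utility and is a maximum over all bids, using the characterisation above.

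The main obstacle is finding valuations for which the strict-improvement inequalities around the whole cycle are simultaneously satisfiable. Best-response cycles require the improvements not to telescope into a potential. I would first try a small search over orderings by hand: fix the sequence of bundles the cycle should visit, write the four strict inequalities that the movers' valuations must satisfy, and solve the resulting small linear system. If this is infeasible with agent~$1$ fixing $x$, it confirms that first-pick changes are needed, and I would redo it allowing $x$ to alternate. Minimality in $m$ then follows from Observation~\ref{observation: unique manipulator}, as the paper notes.
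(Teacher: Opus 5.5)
\paragraph{The gap.} Your proposal is a search strategy, not a proof. The proposition is purely existential, so its proof \emph{is} a counterexample: a concrete instance together with a trajectory in which every move is a strictly profitable best response and some bid vector recurs. You never fix agent~$2$'s preferences, the bids, or the trajectory. You also explicitly leave open whether your first attempt is even feasible (``if this is infeasible \ldots{} I would redo it''). The blocking heuristics and the remark that agent~$1$ must switch her first pick do not show that the four strict improvements and the exact recurrence of the bid vector can hold at the same time. As written, nothing has been established. What you do have is correct and is the right tool for checking a candidate: your description of $\mathrm{RR}$ for $\pi=(1,2,1,2)$ and of the two best-response correspondences is accurate.

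\paragraph{Closing it.} Your guess about the mechanism is right, and the search ends quickly. Keep $\succ_1 = abcd$ (your $v_1$ works, as does any consistent utility). Take $\succ_2 = bdac$, e.g.\ $v_2(a)=2$, $v_2(b)=4$, $v_2(c)=1$, $v_2(d)=3$. Then
\[
\begin{aligned}
\{abcd, bdac\}[ac: 13, bd: 7] &\overset{1}{\rightarrow} \{bacd, bdac\}[ba: 14, dc: 4] \overset{2}{\rightarrow} \{bacd, acdb\}[bc: 7, ad: 5] \\
&\overset{1}{\rightarrow} \{abcd, acdb\}[ab: 14, cd: 4] \overset{2}{\rightarrow} \{abcd, bdac\}[ac: 13, bd: 7].
\end{aligned}
\]
With your characterisation each step is immediate:
\begin{itemize}
\item Against $bdac$, agent~$2$'s top item outside $a$ is $b$, so agent~$1$ obtains $\{a,b\}$ only by picking $b$ first.
\item Against $bacd$, agent~$2$ can never get $b$, and blocking $a$ gives her $\{a,d\}$.
\item Against $acdb$, agent~$2$'s top item outside $a$ is $c$, so $abcd$ gives agent~$1$ the bundle $\{a,b\}$.
\item Against $abcd$, blocking $b$ gives agent~$2$ the bundle $\{b,d\}$.
\end{itemize}
In every step the mover receives the best two-item bundle still attainable for her and strictly improves. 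This holds for any additive utilities consistent with the rankings, and the trajectory returns exactly to the initial truthful profile. This is precisely the paper's instance $(bdac, dcba)$ and execution under the renaming $b\mapsto a$, $d\mapsto b$, $a\mapsto c$, $c\mapsto d$.
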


\begin{proof}
    Let $\langle \agents, \items, \succ, v \rangle$ be such that $\agents = \{1, 2\}$, $\items = \{a, b, c, d\}$, and $\succ = (bdac, dcba)$.
    Assume the utilities are as follows:
    \begin{center}
    \begin{tabular}{c|cccc}
        $v_1$ & $a:2$ & $b:4$ & $c:1$ & $d:3$ \\ \hline
        $v_2$ & $a:1$ & $b:2$ & $c:3$ & $d:4$
    \end{tabular}
    \end{center}
    The following execution of $\mathrm{IRR}$ enters a cycle:
    $$
    \begin{aligned}
        \{bdac, dcba\}[ba: 6, dc: 7] &\overset{1}{\rightarrow} \{dbac, dcba\}[db: 7, ac: 4] \\ 
        & \overset{2}{\rightarrow} \{dbac, bacd\}[da: 5, bc: 5] \\ 
        & \overset{1} {\rightarrow} \{bdac, bacd\}[bd: 7, ac: 4] \\
        & \overset{2}{\rightarrow} \{bdac, dcba\}[ba: 6, dc: 7]
    \end{aligned}
    $$ 

    \vspace{-0.45cm}
    \end{proof}

Proposition~\ref{proposition: cycle no restriction} shows that cycles exist (recall that this holds independently of the underlying utility function used, as long as it consistent with the responsive set extension), and hence that convergence cannot be guaranteed in general. We now introduce different restrictions on the authorized manipulations and study whether they are sufficient to ensure convergence of $\mathrm{IRR}$. Under such restrictions, both best responses and PNE are understood with respect to the corresponding subsets of admissible bids. 
In particular, we consider four such restrictions:

\begin{definition}[single-item change]
A single-item change in a ranking consists in moving one item to another position, while all others keep their relative order. Formally, ranking $\indbid' \! \in \! \bids$ can be obtained from ranking $\indbid \! \in \! \bids$ by a \emph{single-item change} if there exist an item $g \!  \in \! \items$, sub-rankings $L,M,R$, with $M \! \neq \! \emptyset$, such that 
\begin{itemize}
    \item either $\indbid=LgMR$ and $\indbid'=LMgR$ ($g$ moves down),
    \item or $\indbid=LMgR$ and $\indbid'=LgMR$ ($g$ moves up).
\end{itemize}
\end{definition}

\begin{definition}[swap] A swap in a ranking consists in exchanging the positions of two items, while the relative order of all other items remains unchanged. Formally, ranking $\indbid' \in \bids$ can be obtained from ranking $\indbid \in \bids$ by a \emph{swap} if there exist distinct items $g,g' \in \items$ and sub-rankings $L,M,R$ such that
$\indbid=LgMg'R$ and $\indbid'=Lg'MgR$.
\end{definition}

\begin{definition}[adjacent swap] An adjacent swap is a swap involving two consecutive items in a ranking. Formally,
ranking $\indbid' \in  \bids$ can be obtained from ranking $\indbid \in \bids$ by an \emph{adjacent swap} if there exist distinct items $g,g' \in \items$ and sub-rankings $L,R$ such that
$\indbid=Lgg'R$ and 
$\indbid'=Lg'gR$.
\end{definition}

\begin{definition}[item-preserving swap] 
A swap is  item-preserving if it demotes an item in the agent's bid that is initially allocated to her, while ensuring that the item remains allocated to her after the swap. Formally, given a bid vector $\bid \in \bids^n$, a sequence $\pi \in \agents^m$ and an agent $i \in \agents$, bid $\bid_i'$ can be obtained from $\bid_i$ by an \emph{item-preserving swap} if there exist distinct items $g,g' \in \items$ and sub-rankings $L,M,R$ such that:
\begin{itemize}
    \item $\bid_i=LgMg'R$ and $\bid_i'=Lg'MgR$,
    \item $g \in \mathrm{PS}_\pi(\bid)(i)$ and $g \in \mathrm{PS}_\pi(\bid_i',\bid_{-i})(i)$.
\end{itemize}
\end{definition}

We now illustrate item-preserving swaps with an example.

\begin{example}
Consider an instance $\langle \agents, \items, \succ, v \rangle$ with $\agents = \{1, 2\}$, $\items = \{a, b, c, d\}$, and $\succ = (abcd, bcda)$. Let $\bid =\: \succ$ be the truthful bid vector. The allocation $A \! := \! \mathrm{RR}(\bid)$ is given by $A(1)\! = \! \{a,c\}$ and $A(2)\! = \! \{b,d\}$. 
Assume agent $1$ manipulates by reporting a  bid $\bid_1'$:
\begin{itemize}
    \item $\bid_i' = dbca$ is a swap, but not an item-preserving swap. Indeed, letting $A' := \mathrm{RR}(\bid_1', \bid_2)$, we have $A'(1) = \{d,c\}$: item $a$ is demoted and no longer allocated to agent $1$.
    
    \item $\bid_1' = acbd$ is an adjacent swap, but not an item-preserving swap, since the demoted item $b$ is not initially allocated to agent $1$, i.e., $b \notin A(1)$;
    
    \item $\bid_1' = bacd$ is an item-preserving swap. Indeed, letting $A' := \mathrm{RR}(\bid_1', \bid_2)$, we have $A'(1) = \{a,b\}$: item $a$ is demoted but remains allocated to agent $1$.
\end{itemize}
\end{example}

Note that these restrictions do not limit the space of possible manipulations to the same extent. In the unrestricted setting, there are $\Theta(m!)$ possible manipulations. In contrast, there are $\Theta(m^2)$ possible manipulations under both single-item changes and swaps, and $O(m^2)$ under item-preserving swaps.
 Finally, adjacent swaps are the most restrictive, with $\Theta(m)$ possible manipulations.

Note also that a PNE under a restriction need not be a PNE in the unrestricted setting. Although every unrestricted PNE is EF1 with respect to the true preferences \cite{AmanatidisBFLLR24}, this need not hold under our restrictions. Proposition~\ref{proposition: item-preserving swap not EF1} shows this for item-preserving swaps, and Appendix~\ref{appendix: not EF1} for the three other types of manipulation.

\begin{proposition} \label{proposition: item-preserving swap not EF1} 
A PNE under item-preserving swaps may induce a non-EF1 allocation under $\mathrm{RR}$ with respect to the true preferences, and $\mathrm{IRR}$ may converge to such an equilibrium.
\end{proposition}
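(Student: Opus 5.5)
The plan is to prove the statement with one explicit instance and one explicit execution of $\mathrm{IRR}$ in which every step is an item-preserving swap. Since truthful $\mathrm{RR}$ is always EF1 \cite{Caragiannis_et_al2019}, at least one manipulation has to take place. I would aim for the smallest setting first: $n=2$ agents and $m=4$ items, so that each agent receives two items (agent $1$ picks at steps $1,3$ and agent $2$ at steps $2,4$). With two items each, the EF1 failure has a clean target. Agent $2$ is not EF1 towards agent $1$ exactly when both of agent $1$'s items are better for her than both of her own. Equivalently, agent $2$ ends with her third- and fourth-ranked items under $\succ_2$, say $\succ_2 = xyzw$ with final allocation $A(1)=\{x,y\}$ and $A(2)=\{z,w\}$. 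Agent $1$'s true ranking should contain $x$ and $y$ high enough that she has an incentive to bid for them.

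The dynamics I would try to engineer are as follows. Agent $2$ first makes an item-preserving swap that is strictly profitable against agent $1$'s current bid. It demotes an item she currently owns and knows she will still get, in order to secure a contested one. This leaves her bid ranking $z$ above $y$. Agent $1$ then best-responds with an item-preserving swap that promotes $x$ or $y$ above one of her own safe items. She thereby captures both of agent $2$'s top items at steps $1$ and $3$, while agent $2$'s modified bid now sends her to $z$ at step $2$. The key feature to aim for is that agent $2$ is then stuck. Any swap that would let her recover $y$ must demote an item she currently holds, namely $z$ or $w$, and that demotion makes her lose it. Such a swap is therefore not item-preserving. This gap between profitable manipulations and admissible ones is exactly what the restriction allows and unrestricted best responses would not.

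Once a candidate instance is fixed, the verification is routine and I would carry it out in this order:
\begin{enumerate}
\item Run $\mathrm{RR}$ at each step of the execution.
\item Check that each move is an item-preserving swap: the demoted item is owned before and after the swap.
\item Check that each move is a best response among item-preserving swaps, by listing the $O(m^2)$ candidate swaps, which is trivial for $m=4$.
\item Check that the final profile is a PNE under the restriction, i.e.\ neither agent has a strictly profitable item-preserving swap.
\item Exhibit additive utilities consistent with the rankings that witness the EF1 violation. For instance, with $v_2 = (x{:}4, y{:}3, z{:}2, w{:}1)$ we get $3+2 < 4$ after removal, whichever item of agent $1$ is removed.
\end{enumerate}

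The main obstacle is finding a concrete instance in which the chain actually closes. With $m=4$ the constraints are tight. Agent $2$'s first swap must be profitable for her, must be item-preserving, and must leave her exposed afterwards. If that fails, I would move to $m=5$ or $m=6$, or let agent $2$ lose envy-freeness by removing agent $1$'s best item with three items each. Extra items give each agent an owned ``safe'' item to demote, which makes item-preserving swaps available. I would search small instances by hand, or by brute-force enumeration of profiles and utility vectors, and then present the one found as the proof.
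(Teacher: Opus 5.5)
Your proposal never exhibits an instance, and for an existential statement like this the witness is the whole proof. The paper's proof is exactly such a witness: $n=3$, $m=6$, $\succ=(abecfd,feadcb,bcfead)$ with explicit utilities, and three item-preserving swaps by agents $1$, $3$, $2$. The run ends at $A(1)=\{b,c\}$, $A(2)=\{a,f\}$, $A(3)=\{e,d\}$, where $v_3(A(3))=4<5=\min_{g\in A(1)}v_3(A(1)\setminus\{g\})$. Your checklist is what one would run on such a witness. However, the search you describe starts where no witness exists.

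Take $n=2$, $m=4$, and write $p_1,a,p_2,d$ for the items picked at turns $1$ to $4$. A strictly profitable item-preserving swap by agent $2$ must demote $a$ and make her take $p_2$ at turn $2$, with agent $1$ taking $d$ at turn $3$. This forces agent $1$'s bid to be exactly $p_1p_2da$ and requires $v_2(p_2)>v_2(d)$. A strictly profitable one by agent $1$ must demote $p_1$ and take $a$ at turn $1$, keeping $p_1$ for turn $3$.

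As long as agent $2$'s bid has $x,y$ in its first two positions, her turn-$2$ pick is $x$ or $y$. The first swap that lifts $z$ or $w$ above $x$ or $y$ takes one of two forms. Either it is $a\leftrightarrow z$ with agent $1$ bidding $p_1zwa$ and $p_1\in\{x,y\}$, which is your scenario. Or agent $2$ swaps her top item with agent $1$'s first pick, when that pick lies in $\{z,w\}$. In both cases the resulting profile is already a PNE under item-preserving swaps in which agent $2$ still holds $x$ or $y$.

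In your scenario, the chain breaks exactly where agent $1$ should ``capture'' the item agent $2$ kept. Agent $2$ keeps it only because agent $1$'s bid ranks it last. Agent $1$ can reach it only by demoting $w$, which she then loses, or by demoting her first pick, which agent $2$ then takes. A similar check shows that agent $1$'s bid always has her two best items on top. So with two agents and four items, no non-EF1 outcome is reachable at all.

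Your utilities in step~5 also fail to witness a violation. With $v_2=(4,3,2,1)$ and $A(2)=\{z,w\}$, you get $v_2(A(2))=3=v_2(A(1)\setminus\{x\})$, so EF1 holds; you need $v_2(z)+v_2(w)<v_2(y)$.

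The mechanism behind the paper's witness is a cascade that needs a third agent. In the last step, agent $2$ demotes $f$ to grab $a$ at turn $2$. Agent $1$, who used to take $a$ at turn $4$, then takes $c$ instead, which is the item agent $3$ had been keeping. So agent $3$ loses $c$ not to the manipulator but to a displaced agent. Her own earlier swap promoting $e$ leaves her stuck, since any change of her turn-$3$ pick lets agent $1$ take $e$. Your idea that the victim cannot recover without an inadmissible demotion is the right one. A search over three-agent instances, rather than $n=2$ with more items, is the natural way to complete your plan.
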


\begin{proof}
    Consider an instance $\langle \agents, \items, \succ, v \rangle$ such that $\agents = \{1, 2, 3\}$, $\items = \{a, b, c, d, e, f\}$, and $\succ = (abecfd, feadcb, bcfead)$.
    Assume the utilities are as follows:
    \begin{center}
    \begin{tabular}{c|cccccc}
        $v_1$ & $a:6$ & $b:5$ & $c:3$ & $d:1$ & $e:4$ & $f:2$ \\ \hline
        $v_2$ & $a:4$ & $b:1$ & $c:2$ & $d:3$ & $e:5$ & $f:6$  \\ \hline
        $v_3$ & $a:2$ & $b:6$ & $c:5$ & $d:1$ & $e:3$ & $f:4$
    \end{tabular}
    \end{center}
    
We restrict manipulations to item-preserving swaps and give an execution that leads to a PNE whose resulting allocation is not EF1 with respect to $\succ$:
    $$
    \begin{aligned}
        \{abecfd, &feadcb, bcfead\}[ae: 10, fd: 9, bc: 11]  \\
        &\overset{1}{\rightarrow} \{baecfd, feadcb, bcfead\}[ba: 11, fe: 11, cd: 6] \\
        &\overset{3}{\rightarrow} \{baecfd, feadcb, befcad\}[ba: 11, fd: 9, ec: 8] \\
        &\overset{2}{\rightarrow} \{baecfd, aefdcb, befcad\}[bc: 8, af: 10, ed: 4] 
    \end{aligned}
    $$
   The final equilibrium induces  allocation $A$ defined by $A(1)=\{b,c\}$, $A(2)=\{a,f\}$, and $A(3)=\{e,d\}$. Moreover, $v_3(A(3))=4 < \min_{g\in A(1)} v_3(A(1)\setminus\{g\})=5$. Hence, the resulting allocation is not EF1 with respect to the true preferences.
\end{proof}

We study convergence under these restrictions, first showing that neither single-item changes nor swaps guarantee convergence.

\begin{proposition}
    $\mathrm{IRR}$ may cycle even when manipulations are restricted to single-item changes.
\end{proposition}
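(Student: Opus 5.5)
The plan is to give an explicit counterexample, in the same format as the proof of Proposition~\ref{proposition: cycle no restriction}. We exhibit an instance $\langle \agents, \items, \succ, v \rangle$ and an execution of $\mathrm{IRR}$, starting from $\bid^0 = \succ$, in which each step has two properties:
\begin{itemize}
\item the deviating agent's new bid is obtained from her old bid by a single-item change;
\item the new bid is a best response to $\bid_{-i}$ among all single-item changes of her current bid.
\end{itemize}
The execution must return to an earlier bid vector. Both best responses and strict profitability are understood relative to this admissible set, as fixed above.

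The natural first step is to reuse the $n=2$, $m=4$ instance with $\succ=(bdac,dcba)$ and check which moves of the existing cycle are already single-item changes.

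\textbf{Agent 1's moves.} Both are single-item changes: $bdac \to dbac$ moves $d$ up, and $dbac \to bdac$ moves $b$ up.

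\textbf{Agent 2's moves.} Neither is a single-item change. So agent 2's moves must be replaced. One candidate is $dcba \to bdca$ (move $b$ to the front). Against $dbac$ it yields $\{b,c\}$ with utility $5$, which is optimal for agent 2.

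\textbf{Where the attempt breaks.} Agent 1 then responds with $bdac$ and obtains $\{b,a\}$. Agent 2 is already best-responding with $bdca$. Agent 1 now cannot recover $\{b,d\}$, because agent 2 grabs $d$ second. The dynamics therefore stop.

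The lesson is that, to close a cycle, agent 2's bid must alternate between ``contest $d$'' and ``contest $b$''. Each of these switches has to be realised by moving a single item. The presence of the second item must not make a stable intermediate ranking available.

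My construction would therefore proceed in three steps:
\begin{enumerate}
\item Keep two agents. Enlarge to $m=5$ or $6$ items, adding filler items that let each agent's ``flip'' between two contested items be performed by moving one contested item across a block $M$ in a single-item change.
\item Choose the true rankings and utilities so that the contested pair gives each agent the same rock–paper–scissors incentive as in Proposition~\ref{proposition: cycle no restriction}.
\item Write the bid sequence $\bid^0 \overset{1}{\rightarrow} \bid^1 \overset{2}{\rightarrow} \cdots$ until some $\bid^{t'}=\bid^t$.
\end{enumerate}
Filler items that both agents rank low and value little keep the utilities simple.

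\textbf{Main obstacle.} Verifying that every step is a genuine best response within the $\Theta(m^2)$ single-item changes, and not merely profitable, is the main obstacle. Any unexpected better single-item change would break the cycle, or lead to a state where the deviator is already content.

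To handle this I would, at each state:
\begin{enumerate}
\item List, via the round-robin simulation, the bundles reachable by the deviator. Since the opponent's bid is fixed, the deviator's bundle depends only on which items remain at her turns.
\item Argue that the chosen bid attains the maximum achievable utility. The cleanest sufficient check is to show it attains the best bundle achievable under unrestricted manipulation.
\end{enumerate}
If hand design fails, I would run a small exhaustive search over two-agent instances with $m\le 6$ and integer utilities. That search would find a cycle of restricted best responses, and the proof would then present it in the paper's $\{\cdot\}[\cdot]$ notation.
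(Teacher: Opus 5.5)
\textbf{The gap.} The statement is existential, so its proof consists entirely of an explicit instance together with a verified cycling execution. Your proposal never provides one. You correctly check that the cycle of the unrestricted proposition cannot be reused, since agent 2's moves are not single-item changes. Your diagnosis of why your patched execution on $\succ=(bdac,dcba)$ stalls at an equilibrium is fine. After that, however, the proposal only describes a design strategy (filler items, $m=5$ or $6$) and a fallback exhaustive search. Nothing in it shows that a cycling instance exists, so as written the proposition is not proved. Moreover, the conclusion that extra items are needed does not follow: the failure of one particular 4-item execution says nothing about 4-item instances in general.

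\textbf{What the paper supplies.} The paper gives exactly the missing object, with $n=2$ and $m=4$. This is minimal, since $m=n+1$ always converges. Take $\succ=(adcb,badc)$, $v_1(a,b,c,d)=(4,1,2,3)$ and $v_2(a,b,c,d)=(3,4,1,2)$. From the truthful profile, the execution $\{adcb,badc\}\overset{2}{\rightarrow}\{adcb,dbac\}\overset{1}{\rightarrow}\{dacb,dbac\}\overset{2}{\rightarrow}\{dacb,dacb\}$ is followed by
\[
\{dacb,dacb\}\overset{1}{\rightarrow}\{cdab,dacb\}\overset{2}{\rightarrow}\{cdab,bdac\}\overset{1}{\rightarrow}\{dacb,bdac\}\overset{2}{\rightarrow}\{dacb,dacb\}.
\]
Every step moves one item across a block made of the other genuine items. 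Agent 2 moves $b$ from the bottom to the top and back; agent 1 moves $c$ from third place to the top and back. This is precisely your ``flip across a block $M$'' idea, realised without any filler items. Each step is also strictly profitable. Moreover, each deviator obtains the best bundle achievable by \emph{any} bid against the opponent's current bid, so your proposed sufficient check for restricted best responses applies directly. To turn your proposal into a proof, you need to exhibit such an instance and carry out these per-step verifications.
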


\begin{proof}
    Consider an instance $\langle \agents, \items, \succ, v \rangle$ with $\agents = \{1,2\}$ and $\items = \{a,b,c,d\}$. Let $\succ = (adcb,badc)$, with utilities given by:
    \begin{center}
    \begin{tabular}{c|cccc}
        $v_1$ & $a:4$ & $b:1$ & $c:2$ & $d:3$ \\ \hline
        $v_2$ & $a:3$ & $b:4$ & $c:1$ & $d:2$
    \end{tabular}
    \end{center}
    
    \noindent The following execution of $\mathrm{IRR}$ enters a cycle:
    $$
    \begin{aligned}
    \{adcb, badc\}[ad: 7, bc: 5] 
        &\overset{2}{\rightarrow} \{adcb, dbac\}[ac: 6, db: 6] \\
        &\overset{1}{\rightarrow} \{dacb, dbac\}[da: 7, bc: 5] \\
        &\overset{2}{\rightarrow} \{dacb, dacb\}[dc: 5, ab: 7] \\
        &\overset{1}{\rightarrow} \{cdab, dacb\}[ca: 6, db: 6] \\
        &\overset{2}{\rightarrow} \{cdab, bdac\}[cd: 5, ba: 7] \\
        &\overset{1}{\rightarrow} \{dacb, bdac\}[da: 7, bc: 5] \\
        &\overset{2}{\rightarrow} \{dacb, dacb\}[dc: 5, ab: 7]
        \end{aligned}
    $$
\end{proof}

\begin{proposition}
    $\mathrm{IRR}$ may cycle even when manipulations are restricted to swaps.
\end{proposition}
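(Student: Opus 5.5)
The statement is existential, so I will prove it the way the two previous cycle results are proved: exhibit one explicit instance $\langle \agents, \items, \succ, v\rangle$ and one execution of $\mathrm{IRR}$ from $\bid^0 = \succ$ that returns to an earlier bid vector. Here every step must be a swap and a best response among swaps.

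The cycle of Proposition~\ref{proposition: cycle no restriction} cannot be reused directly. Agent~$2$'s revision $dcba \to bacd$ there is not a single swap. The cycle under single-item changes has the same problem: $badc \to dbac$ is a rotation of three items, not a swap. I will therefore take those executions only as templates and search for a new instance. I will start with $n=2$ and $m=4$, then move to $m=5$ if needed. With two agents, agent~$1$ picks at positions $1,3,\dots$ and agent~$2$ at positions $2,4,\dots$. The standard mechanism for a cycle is two-sided: each agent improves by promoting an item the other agent is about to take, and this in turn makes the other agent's previous protective promotion useless or harmful, so she reverts it.

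Concretely, I will build a cycle of length four, $\bid^a \overset{1}{\rightarrow} \bid^b \overset{2}{\rightarrow} \bid^c \overset{1}{\rightarrow} \bid^d \overset{2}{\rightarrow} \bid^a$. Each arrow is a swap, and agent~$1$'s two swaps cancel each other, as do agent~$2$'s. The states in the cycle need not include the truthful profile, but the truthful profile must reach the cycle by swap best responses. The cheapest way to guarantee this is to make $\bid^a = \succ$, as in Proposition~\ref{proposition: cycle no restriction}.

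Verification has three parts.
\begin{itemize}
\item First, I will compute the $\mathrm{RR}$ allocation at every state of the execution.
\item Second, at each step I will check that the deviating agent strictly gains under her utilities $v_i$. I will choose cardinal values, as in the earlier tables, so that the relevant bundle comparisons are strict.
\item Third, I will enumerate all $\binom{m}{2}$ swaps available to the deviating agent and check that none gives her strictly more than the chosen one. This makes the move a best response with respect to the admissible set.
\end{itemize}
Since the dynamics let an arbitrary agent with a profitable deviation move, I do not need to rule out the other agent's deviations. If several swaps tie for best, any of them may be selected.

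The main obstacle is the third check. Swaps are far less local than adjacent swaps, so at each state the mover usually has some other swap that secures a better pair, for instance one that promotes her top remaining item to position one. That would break the intended cycle. My plan for this is to tune the utilities so that each agent's top item is never contested at the relevant states, which makes wasteful swaps dominated. If no $m=4$ instance works, I will go to $m=5$, where the extra item gives room to make the intended swap uniquely optimal. In practice I would run a small exhaustive search over rankings and integer utilities to find the instance, then present the execution in the same format as the previous proofs.
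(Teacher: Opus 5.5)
Your template matches the paper's proof exactly: $n=2$, $m=4$, the truthful profile as the first state of a four-step cycle, with agent~$1$'s two swaps undoing each other and agent~$2$'s two swaps undoing each other. The gap is that the proposal never exhibits the instance. For an existential statement, the witness together with the three checks you list \emph{is} the proof. Describing a search you would run establishes nothing. Your own fallback to $m=5$ shows that, as written, you have not shown the search succeeds at any size.

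The search does succeed at the first size you try. The paper takes $\succ=(dbac,bcda)$ with $v_1$: $a{:}2,\ b{:}3,\ c{:}1,\ d{:}4$ and $v_2$: $a{:}1,\ b{:}4,\ c{:}3,\ d{:}2$, and the execution
\[
\begin{aligned}
\{dbac, bcda\}[da: 6, bc: 7]
&\overset{1}{\rightarrow} \{bdac, bcda\}[bd: 7, ca: 4]
\overset{2}{\rightarrow} \{bdac, dcba\}[ba: 5, dc: 5]\\
&\overset{1}{\rightarrow} \{dbac, dcba\}[db: 7, ca: 4]
\overset{2}{\rightarrow} \{dbac, bcda\}[da: 6, bc: 7].
\end{aligned}
\]
Each arrow exchanges $b$ and $d$ in the mover's bid.

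This example also dissolves what you call the main obstacle. No enumeration of the $\binom{m}{2}$ swaps is needed, because every move attains the mover's best bundle against the other's current bid among \emph{all} rankings. Agent~$1$ obtains her top pair $\{b,d\}$ in both of her moves, and agent~$2$ obtains her top pair $\{b,c\}$ in her second move. In agent~$2$'s first move, agent~$1$ (bidding $bdac$) takes $b$ first and then prefers $d$ to $a$ to $c$. So agent~$2$ ends with $\{d,c\}$ if she picks $d$ at turn~$2$, and with $\{a,c\}$ otherwise.

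The design principle worth adopting is to make each swap an unrestricted best response, so that best response among swaps is immediate. Moreover, every comparison used either involves a top pair or replaces one item by a better one. Hence the cycle works for any utilities consistent with $\succ$, not just the chosen table.
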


\begin{proof}
   Consider an instance $\langle \agents, \items, \succ, v \rangle$ with $\agents = \{1,2\}$, $\items = \{a,b,c,d\}$, and $\succ = (dbac,bcda)$, with utilities given by:
    \begin{center}
    \begin{tabular}{c|cccc}
        $v_1$ & $a:2$ & $b:3$ & $c:1$ & $d:4$ \\ \hline
        $v_2$ & $a:1$ & $b:4$ & $c:3$ & $d:2$
    \end{tabular}
    \end{center}
    
    \noindent The following execution of $\mathrm{IRR}$ enters a cycle:
    $$
    \begin{aligned}
    \{dbac, bcda\}[da: 6, bc: 7] 
        &\overset{1}{\rightarrow} \{bdac, bcda\}[bd: 7, ca: 4] \\
        &\overset{2}{\rightarrow} \{bdac, dcba\}[ba: 5, dc: 5] \\
        &\overset{1}{\rightarrow} \{dbac, dcba\}[db: 7, ca: 4] \\
        &\overset{2}{\rightarrow} \{dbac, bcda\}[da: 6, bc: 7]
    \end{aligned}
    $$
\end{proof}

While $\mathrm{IRR}$ does not converge under single-item change and swap, we show that it converges for any number of agents and items under item-preserving swaps. We start by showing it is true for $n=2$.

\begin{theorem}
When $n=2$, $\mathrm{IRR}$ reaches a PNE under item-preserving swaps after at most $\frac{m(m-1)}{2}$ manipulations.
\end{theorem}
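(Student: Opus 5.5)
The bound $\frac{m(m-1)}{2}$ is the number of unordered pairs of items, so I would prove the theorem with a \emph{pair-counting potential}: every manipulation of the dynamics is an item-preserving swap on some pair $\{g,g'\}$, and I would show that no pair is swapped twice over the whole execution, by either agent. Since each manipulation consumes a fresh pair, there are at most $\binom{m}{2}$ manipulations. Once no agent has a strictly profitable item-preserving swap, the current bid vector is by definition a PNE with respect to the admissible bids, which gives convergence.

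The first step is a structural description of a strictly profitable item-preserving swap for agent $i$ when $n=2$. Write $\bid_i=LgMg'R$, $\bid_i'=Lg'MgR$, with $g\in A(i)$ and $g\in A'(i)$, where $A:=\mathrm{RR}(\bid)$ and $A':=\mathrm{RR}(\bid_i',\bid_{-i})$. Under RR with two agents, the picks of agent $i$ before she reaches $g'$ are unchanged. Since $g'$ is promoted, a strict gain requires $g'\in A'(i)\setminus A(i)$. Because $g$ is still obtained afterwards, I expect the change in $A(i)$ to be exactly that $g'$ replaces some item $h$ with $g' \succ_i h$, which is lost to the other agent $j$. The other agent's bid is fixed, so her picks can only shift by this exchange. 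The goal of this step is to show that $A'(i)\supseteq\{g,g'\}$ and that in $\bid_i'$ agent $i$ ranks $g'$ above $g$ with both in her bundle.

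The second step, and the main obstacle, is an invariant showing that the pair $\{g,g'\}$ cannot be swapped again. For agent $i$ herself, swapping it back would demote $g'$, which is allocated to her, below $g$ in a way that is item-preserving. I would argue via the first step that this can only return her to a weakly worse bundle, so it is not a strictly profitable best response. For agent $j$, a swap on $\{g,g'\}$ must demote an item she holds, but both $g$ and $g'$ belong to $i$. I would prove by induction over the execution that once $g,g'$ both lie in $A(i)$ after such a swap, at least one of them stays with $i$ in every later allocation, so that $j$ never holds the demoted item of that pair.

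I expect this induction to be delicate. Later swaps by $j$ can steal $g'$, since $g'$ might be exactly the $h$ lost in $j$'s own manipulation. I would first try to strengthen the invariant to a statement about relative positions: in $\bid_i$, the pair keeps the order $g'$ before $g$, because $i$ only ever moves items through item-preserving swaps. I would then exploit the alternation of picks in two-agent RR to show that any pair ordered this way in the current bids of both agents can never again be the subject of a profitable item-preserving swap. If the per-pair argument fails, the fallback is a global potential: the number of inverted pairs in $(\bid_1,\bid_2)$ relative to a suitable reference order, such as the current pick order $\rho$. Each manipulation would strictly decrease this count by at least one, and the count is bounded by $\frac{m(m-1)}{2}$.
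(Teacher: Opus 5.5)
There is a genuine gap, and it starts in your first step. A strictly profitable item-preserving swap need not bring the promoted item $g'$ into agent $i$'s bundle, so the target $A'(i)\supseteq\{g,g'\}$ is false. Here is a counterexample:
\begin{itemize}
\item Agent $1$ has true ranking and bid $x\,g\,h\,g'\,r_1\,r_2$, and agent $2$ has true ranking and bid $g'\,h\,r_2\,g\,r_1\,x$. Round Robin gives agent $1$ the bundle $\{x,g,r_1\}$.
\item After agent $1$ swaps to $x\,g'\,h\,g\,r_1\,r_2$, agent $2$ still takes $g'$ at turn $2$. Agent $1$ then takes $h$ at turn $3$, agent $2$ takes $r_2$, and agent $1$ takes $g$ at turn $5$, so agent $1$ gets $\{x,h,g\}$.
\item This swap is item-preserving and strictly profitable, since $h\succ_1 r_1$. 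It is even a best response, as these are her three best items. Yet $g'$ stays with agent $2$; the gain comes from an item of $M$.
\end{itemize}
So your reason why agent $j$ can never swap $\{g,g'\}$, namely that she holds neither item, is unavailable. Your proposed strengthening also fails. Later item-preserving swaps by agent $i$ can flip the relative order of $g'$ and $g$ in $\bid_i$: for example, a swap that demotes a held item ranked above $g'$ and promotes $g$, with $g'$ lying in between. The invariant ``no pair is ever swapped twice'' is therefore unsupported, and it is not what the bound rests on.

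The missing idea is a lemma about the \emph{other} agent's bid. When $n=2$ and agent $i$ performs a strictly profitable item-preserving swap demoting $g$ and promoting $g'$, agent $j$ must rank $g'$ above $g$ in $\bid_j$. The paper proves this by contradiction. Suppose $j$ ranked $g$ above $g'$. Then neither item is taken before the first turn at which the two runs differ, and at that turn $i$ takes $g$ in the old run and $g'$ in the new one. Afterwards, $j$ never takes $g$ in the new run, and everything she ranks above $g$ she also ranks above $g'$, so her picks coincide in both runs. Agent $i$'s intermediate picks from $M$ coincide as well. When $i$ finally takes $g$ in the new run, she takes $g'$ in the old one; the runs are then identical and her bundle is unchanged, a contradiction.

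With this lemma, the potential is simply $d_{KT}(\bid_1,\bid_2)$. It is not a count relative to a moving reference such as the current pick order $\rho$, which you give no reason to be monotone. Under a swap, the distance changes as follows:
\begin{itemize}
\item The pair $\{g,g'\}$ goes from disagreement to agreement.
\item Each $g''\in M$ lying between $g'$ and $g$ in $\bid_j$ removes two further disagreements.
\item Every other $g''\in M$ only trades a disagreement on $\{g,g''\}$ for one on $\{g',g''\}$, or vice versa.
\end{itemize}
Hence each manipulation lowers the distance by at least one, starting from a value at most $\binom{m}{2}$. This accounting explicitly lets pairs turn from agreement back into disagreement, which is why a per-pair invariant is the wrong target.
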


\begin{proof}
Consider an instance $\langle \agents, \items, \succ, v \rangle$ with
$|\agents|=2$. Let $\bid^t$ be the bid profile at iteration $t$, and let
$A^t=\mathrm{RR}(\bid^t)$. Without loss of generality, assume that agent~$1$
performs the following strictly profitable item-preserving swap:
\[
\bid_{1}^{t}=L\,g'\,M\,g\,R
\quad\longrightarrow\quad
\bid_{1}^{t+1}=L\,g\,M\,g'\,R.
\]
We have $\bid_2^{t+1}=\bid_2^t$ and
$A^{t+1}=\mathrm{RR}(\bid^{t+1})$. Since the swap is item-preserving, we have: $g' \in A^{t}(1)\cap A^{t+1}(1)$.

We first show that $g$ is ranked above $g'$ in $\bid_2^t$. Suppose, for
contradiction, that $g'$ is ranked above $g$ in $\bid_2^t$. 
Since the
manipulation is strictly profitable, $A^t(1)\neq A^{t+1}(1)$. 
Consider the executions of the round-robin sequence at iterations $t$ and
$t+1$, and let $\tau$ be the first turn at which the selected items differ.
Until turn $\tau$, the same items are selected in both executions, so the same items are available at turn $\tau$. Since agent~$2$ has the same bid in both iterations, she would make the same choice from the same set of available items. Hence, the agent acting at turn $\tau$ must be agent~$1$.

Moreover, neither $g$ nor $g'$ can have been selected before turn $\tau$.
Indeed, suppose first that $g'$ had already been selected. Since the choices
coincide before $\tau$ and $g'$ is allocated to agent~$1$ in both
executions, agent~$1$ must have selected it in both. However, in iteration $t+1$, $g$ is ranked above $g'$ in $\bid^{t+1}_1$.
Thus, if agent~$1$ selects $g'$ before turn $\tau$, then $g$ must already
have been selected. Since the two executions coincide before $\tau$, $g$
must also have been selected before $\tau$ in iteration $t$. But it could
not have been selected by agent~$1$ in iteration $t$, since $g'$ was still
available and is ranked above $g$ in $\bid^t_1$. Nor could it have been
selected by agent~$2$, since, by assumption, $g'$ is ranked above $g$ in
$\bid^t_2$. This is a contradiction. Similarly, $g$
cannot have been selected before $\tau$: agent~$1$ would prefer the still
available item $g'$ in iteration $t$, while agent~$2$ ranks $g'$ above $g$ in
$\bid^t_2$.

Thus, both $g$ and $g'$ are available at turn $\tau$, and agent~$1$ selects
$g'$ in iteration $t$ and $g$ in iteration $t+1$. Let $\tau'>\tau$ be the turn at which agent~$1$ selects $g'$ in iteration
$t+1$. Such a turn exists since the swap is item-preserving. Immediately
after turn $\tau$, the sets of remaining items in the two executions differ
only in that $g$ remains available at iteration $t$, whereas $g'$ remains
available at step $t+1$. 

We show that all choices coincide from turn $\tau+1$ to turn $\tau'-1$.
Suppose inductively that, at some such turn, the remaining sets are
$C\cup\{g\}$ at iteration $t$ and $C\cup\{g'\}$ at iteration $t+1$.
If agent~$2$ acts, she does not select $g'$ at iteration $t+1$, since $g'$
is eventually allocated to agent~$1$. Hence, she selects some $g''\in C$
ranked above $g'$ in $\bid_2^t$. By assumption, $g'$ is ranked above $g$,
so $g''$ is also ranked above $g$, and therefore agent~$2$ selects the same item $g''$
at iteration $t$. If agent~$1$ acts before turn $\tau'$, she selects some $g''\in C$ rather
than $g'$ at iteration $t+1$ (by definition of $\tau'$). Thus, $g''$ is
ranked above $g'$ in $\bid_1^{t+1}$, and therefore belongs to $L$ or $M$.
In either case, $g''$ is ranked above $g$ in $\bid_1^t$. Moreover, since
the relative order of the items in $C$ is the same in $\bid_1^t$ and
$\bid_1^{t+1}$, $g''$ is the highest-ranked item of $C$ in both bids.
Hence, agent~$1$ also selects $g'$ from $C\cup\{g\}$ at iteration $t$.
Therefore, the same item is selected in both executions at every turn
between $\tau+1$ and $\tau'-1$.

At turn $\tau'$, agent~$1$ selects $g'$ in iteration $t+1$. Hence, every
remaining item common to both executions is ranked below $g'$ in
$\bid_1^{t+1}$ and thus belongs to $R$. In iteration $t$, item $g$ is still
available and is ranked above every item in $R$, so agent~$1$ selects $g$.
The sets of remaining items are then identical, and all subsequent choices
coincide. Agent~$1$ therefore receives exactly the same bundle in both
iterations, contradicting the strict profitability of the manipulation.
Hence, $g$ is ranked above $g'$ in $\bid_2^t$.

We now use the Kendall tau distance as a potential function to prove the convergence of $\mathrm{IRR}$. For two rankings $\indbid$ and $\indbid'$, let $$
D(\indbid,\indbid')
\! = \! 
\left\{
\{x,y\}\subseteq \!  \items
\;\middle|\;
\indbid \text{ and } \indbid' \text{ rank } x \text{ and } y \text{ in opposite orders}
\right\}$$
and define $
d_{KT}(\indbid,\indbid')=|D(\indbid,\indbid')|.$ 
Thus, $d_{KT}(\indbid,\indbid')$ is the number of pairs of items ordered differently in the two rankings.

At iteration $t$, agent~$1$ ranks $g'$ above $g$ in $\bid_1^t$, while we showed that agent~$2$ ranks $g$ above $g'$ in $\bid_2^t$. After the swap, agent~$1$ ranks $g$ above $g'$ in $\bid_1^{t+1}$. Hence, the agents disagree on the relative order of $g$ and $g'$ at iteration $t$, but agree at iteration $t+1$, decreasing the Kendall tau distance by one. 

Moreover, for every $g'' \in M$, agent~$1$ initially ranks $g'$ above $g''$ and $g''$ above $g$, while after the swap she ranks $g$ above $g''$ and $g''$ above $g'$. 
If agent~$2$ ranks $g$ above $g''$ above $g'$ in $\bid_2^t$, two additional disagreements disappear. Otherwise, $g''$ is ranked either above both $g$ and $g'$, or below both. In either case, among the pairs $\{g,g''\}$ and $\{g',g''\}$, one disagreement disappears while another appears, so their contribution to the Kendall tau distance remains unchanged.  All other pairs keep the same relative order. Hence 
$d_{KT}(\bid_1^{t+1},\bid_2^{t+1})
=
d_{KT}(\bid_1^t,\bid_2^t)
-1
-2
\left|
\left\{
g''\in M
\;\middle|\;
g'' \text{ lies between } g \text{ and } g' \text{ in } \bid_2^t
\right\}
\right|.
$
Thus, every strictly profitable item-preserving swap performed by agent~$1$ decreases the Kendall tau distance by at least one. By symmetry, the same holds when agent~$2$ manipulates. Finally, the Kendall tau distance is a nonnegative integer bounded by $d_{KT}(\bid_1^0,\bid_2^0)\leq \binom{m}{2}=m(m-1)/2$. Since every manipulation decreases this distance by at least one, at most
$m(m-1)/2$ manipulations can occur. Hence, $\mathrm{IRR}$ terminates at a PNE.
\end{proof}

Thus, not only does the process converge, but the two agents' bids also become more similar throughout the process, an interesting observation in itself. However, this argument, and in particular its potential function, does not extend to $n \geq 3$. A natural attempt would be to sum the pairwise Kendall tau distances between agents' bids, but this quantity is not monotonically decreasing (as shown in Appendix~\ref{appendix: sum of KT}). We therefore turn to a different argument, which proves convergence for any $n\geq 2$, including $n=2$, but yields an $O(m!)$ bound on the number of steps rather than the quadratic bound obtained in the two-agent case.

\begin{theorem}
  When $n \geq 2$, $\mathrm{IRR}$ converges to a PNE under item-preserving swaps within $m!-1$ manipulations.
\end{theorem}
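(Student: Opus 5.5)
The plan is to find a potential on the \emph{picking order} $\rho^t = (\rho^t_1,\ldots,\rho^t_m)$, the sequence of items selected by $\mathrm{RR}$ under $\bid^t$, and to show it strictly increases at every strictly profitable item-preserving swap. Since $\rho^t$ is a permutation of $\items$, there are at most $m!$ values. A strictly monotone potential therefore gives at most $m!-1$ manipulations, and termination happens exactly at a PNE under this restriction.

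\textbf{Step 1 (local structure).} Suppose agent $i$ performs the strictly profitable item-preserving swap $\bid_i^t = L\,g'\,M\,g\,R \to \bid_i^{t+1} = L\,g\,M\,g'\,R$, with $g' \in A^t(i)\cap A^{t+1}(i)$. I would redo the first half of the two-agent proof for arbitrary $n$. Let $\tau$ be the first turn at which the two executions select different items. Before $\tau$ the available sets coincide, and every agent other than $i$ has an unchanged bid. Hence the agent acting at $\tau$ is $i$.

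Next I would show that $g$ and $g'$ are both still available at $\tau$. If either had been taken earlier, then, since $i$ holds $g'$ in both runs, agent $i$ would pick the same item from the same set in both bids. This is because $L$, $M$, $R$ keep their relative order and only $g,g'$ are exchanged. The only exception is when the choice involves $g$ versus $g'$, and that is excluded before $\tau$.

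So at turn $\tau$ the old run picks $g'$ and the new run picks $g$. Consequently $\rho^{t+1}$ agrees with $\rho^t$ on positions $1,\ldots,\tau-1$. At position $\tau$ it places $g$, an item that appears strictly later than position $\tau$ in $\rho^t$. Unlike the two-agent case, no claim about other agents' relative rankings of $g,g'$ is needed here.

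\textbf{Step 2 (the potential).} I want a total order $\lhd$ on permutations such that ``agree up to $\tau-1$, then put at position $\tau$ an item that $\rho^t$ places later'' always means $\rho^t \lhd \rho^{t+1}$. My first attempt is a code that depends only on the permutation itself. Map $\rho$ to $c(\rho) = (c_1,\ldots,c_m)$, where $c_k$ is the rank of $\rho_k$ among the items still unpicked at step $k$. The ranking is taken with respect to a fixed reference order, namely the truthful picking order $\rho^0$. Permutations are then compared lexicographically on $c$.

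For this to work I need item $g$ to rank below $g'$ in the reference order whenever such a swap occurs. So I would try to prove an invariant by induction on $t$: the manipulator always promotes an item that is later in $\rho^0$ over one that is earlier. A plausible route is that promoting $g$ means $g$ is ``contested'', so someone picks it before $i$ would otherwise reach it.

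\textbf{Main obstacle.} Step 1 is a routine adaptation of the existing two-agent argument. The hard part is Step 2: the relation ``$g$ appears later in the current $\rho^t$'' is relative to a changing order, so it is not obviously acyclic. If the fixed-reference invariant fails, my fallback is to make the potential the lexicographic vector of pick times, $(\mathrm{pos}_{\rho}(x))_x$. I would order the items in this vector by their first-change turn, and argue directly that no permutation can recur along the dynamics. That would still yield the $m!-1$ bound because the recurring object lives in the $m!$-element set of picking orders.
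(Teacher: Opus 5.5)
Your Step~1 is incorrect as stated, and the invariant you want for Step~2 is false. The swap $L\,g'\,M\,g\,R\to L\,g\,M\,g'\,R$ does more than exchange $g$ and $g'$: it also moves $g'$ below every item of $M$. So if another agent has already taken $g$ before the first divergence turn $\tau$, agent $i$ picks $g'$ in the old run but the best remaining item of $M$ in the new run. This is Case~(2) of the paper's proof, and it already occurs for $n=2$. Take $\succ_1=wxyzuv$ and $\succ_2=zyuvxw$. Agent~1's swap $wxyzuv\to wzyxuv$ is a strictly profitable item-preserving best response: it moves her from $\{w,x,u\}$ to her top three $\{w,x,y\}$. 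Yet at turn~$3$ she picks $y\in M$, because agent~2 took $z$ at turn~$2$. In that same step the promoted $z$ \emph{precedes} the demoted $x$ in $\rho^0=(w,z,x,y,u,v)$, so your invariant fails at the very first manipulation. The code $c$ itself is not monotone either. In the instance of Proposition~\ref{proposition: item-preserving swap not EF1}, agent~1's first swap takes $\rho^0=(a,f,b,e,d,c)$ to $\rho^1=(b,f,c,a,e,d)$, raising $c_1$ from $1$ to $3$. Agent~3's item-preserving best response $bcfead\to befcad$ then gives $\rho^2=(b,f,e,a,d,c)$. The first difference is at position $3$, and $c_3$ drops from $4$ to $2$ because $e$ precedes $c$ in $\rho^0$.

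The fallback is not an argument. Ordering items by their first-change turn depends on the transition, so it defines no potential, and ``no permutation recurs'' is the claim itself. Moreover, the only fact Step~1 feeds into Step~2 carries no information. For any two distinct permutations, each agrees with the other up to their first difference $\tau$ and places at $\tau$ an item the other places later. This relation therefore holds in both directions and cannot orient anything. Nothing in your plan rules out a picking order recurring with different bids, and the paper's argument does not rule this out either: it excludes recurring bid profiles, not recurring picking orders. The missing idea is a potential that reads the bids. The paper uses $\Phi(\bid)=(\phi_1,\ldots,\phi_m)$, where $\phi_s$ is the \emph{absolute} rank, in the current bid of $\pi_s$, of the item picked at turn $s$. (Measured in the picker's own bid, rank among remaining items is always $1$, so absolute rank is essential.) The paper shows that $\Phi$ increases lexicographically. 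At $\tau$, agent $i$ either takes an $M$-item, of rank $>|L|+1$, or takes $g$, at exactly the rank $|L|+1$ that $g'$ had. In the tie case, strict profitability forces a first rank difference at some turn $q$ strictly between $\tau$ and the turn where $i$ retakes $g'$; otherwise the runs re-merge there. That turn is not $i$'s, since her rank sets coincide in the two runs. Item preservation then makes it a turn where some $j\neq i$ takes $g$ in the old run and a lower-ranked item in the new one. Finally, $1\le\phi_s\le s$ yields at most $m!$ values of $\Phi$, with no need to count permutations of $\items$.
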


\begin{proof}
Consider an instance $\langle \agents, \items, \succ, v \rangle$ and a round-robin sequence $\pi=(\pi_1,\ldots,\pi_m)$, where $\pi_s$ denotes the agent selecting at turn $s$. For an item $g$ and agent $i$'s ranking $\bid_i$, let $\operatorname{rk}(g,\bid_i)$ denote the rank of $g$ in $\bid_i$, with rank $1$ being the highest. For every turn $s\in\{1,\ldots,m\}$, let $\phi_s(\bid)$ denote the rank, in $\bid_{\pi_s}$, of the item selected by agent $\pi_s$ at turn $s$ under bid profile $\bid$. We use the following potential function:
\[
\Phi(\bid)
=
(\phi_1(\bid),\ldots,\phi_m(\bid)).
\]
We show that every strictly profitable item-preserving swap strictly increases $\Phi$ according to the following lexicographic order: for two vectors $x,y\in\mathbb{N}^m$, we write $x\succ_{\mathrm{lex}}y$ if there exists an index $s$ such that $x_{s'}=y_{s'}$ for all $s'<s$ and $x_s>y_s$.

Consider a bid profile $\bid^t$ at some iteration $t$ of $\mathrm{IRR}$, and suppose that agent $i\in\agents$ performs the following strictly profitable item-preserving swap:
\[
\bid_i^t
=
L\,g'\,M\,g\,R
\quad\longrightarrow\quad
\bid_i^{t+1}
=
L\,g\,M\,g'\,R.
\]
Let $r_{g'}^t
=
\operatorname{rk}(g',\bid_i^t)
=
|L|+1
$ and $
r_g^t
=
\operatorname{rk}(g,\bid_i^t)
=
|L|+|M|+2.
$
After the swap, $\operatorname{rk}(g,\bid_i^{t+1})=r_{g'}^t$ and $\operatorname{rk}(g',\bid_i^{t+1})=r_g^t.$ 
Since the swap is item-preserving, $g'$ is allocated to agent $i$ both before and after the swap.

Let $\tau$ be the turn at which agent $i$ selects $g'$ in iteration $t$. The two executions coincide before turn $\tau$. Indeed, $g'$ remains available until turn $\tau$ in iteration $t$. Hence, at any earlier turn of agent $i$, she selects an item from $L$. Such an item has the same position in $\bid_i^t$ and $\bid_i^{t+1}$ and is ranked above all items whose positions are affected by the swap. Thus, assuming inductively that the same items are available, agent $i$ makes the same choice in both iterations. All other agents have unchanged bids and therefore also make the same choices. Hence, the same items are selected before turn $\tau$.

At turn $\tau$, agent $i$ selects $g'$ at rank $r_{g'}^t$ in iteration $t$. Since no item from $L$ remains, three cases are possible in iteration $t+1$:

\medskip

\noindent \textbf{(1) Agent $\boldsymbol{i}$ still selects $\boldsymbol{g'}$.}  In this case, $g$ and all items in $M$ must already have been selected, since they are ranked above $g'$ in $\bid_i^{t+1}$. As the two executions coincide before turn $\tau$, these items have also been selected in iteration $t$. Thus, after $g'$ is selected, only items from $R$ remain available, and the two executions coincide thereafter. Hence, both iterations return the same allocation, contradicting the strict profitability of the manipulation.

\medskip

\noindent \textbf{(2) Agent $\boldsymbol{i}$ selects some item $\boldsymbol{g''\in M}$.}
Since $g$ is ranked above every item of $M$ in $\bid_i^{t+1}$, item $g$ must already have been selected. Moreover, since $\operatorname{rk}(g,\bid_i^{t+1})=r_{g'}^t$,
any $g''\in M$ has rank strictly greater than $r_{g'}^t$ in $\bid_i^{t+1}$. Therefore,
\[
\phi_\tau(\bid^{t+1})
>
r_{g'}^t
=
\phi_\tau(\bid^t).
\]
Since all previous coordinates are identical, we obtain
\[
\Phi(\bid^{t+1})
\succ_{\mathrm{lex}}
\Phi(\bid^t).
\]

\medskip

\noindent \textbf{(3) Agent $\boldsymbol{i}$ selects $\boldsymbol{g}$.} Since $g'$ has rank $r_{g'}^t$ in $\bid_i^t$ and $g$ has rank $r_{g'}^t$ in $\bid_i^{t+1}$, we have \\[-15pt]
    \[
    \phi_\tau(\bid^{t+1})
    =
    \phi_\tau(\bid^t)
    =
    r_{g'}^t.
    \]
    Immediately after turn $\tau$, the sets of remaining items in the two executions differ only in that $g$ remains available in iteration $t$, whereas $g'$ remains available in iteration $t+1$.

   Let $\tau'>\tau$ be the turn at which agent $i$ selects $g'$ in iteration
$t+1$. Such a turn exists because the swap is item-preserving. We claim
that there exists a turn $q$ with $\tau<q<\tau'$ such that
\[
\phi_q(\bid^{t+1})\neq\phi_q(\bid^t).
\]

Suppose otherwise. Consider any turn $s$ with $\tau<s<\tau'$. If $\pi_s\neq i$, the acting agent has the same bid in both executions, so selecting an item of the same rank implies selecting the same item. If $\pi_s=i$, all items other than $g$ and $g'$ have the same rank in $\bid_i^t$ and $\bid_i^{t+1}$, while the common rank $r_g^t$ of $g$ in $\bid_i^t$ and $g'$ in $\bid_i^{t+1}$ cannot be selected before $\tau'$. Hence, whenever the selected ranks coincide, the same item is selected in both executions for every turn $\tau<s<\tau'$.

At turn $\tau'$, agent $i$ selects $g'$ in iteration $t+1$ and $g$ in
iteration $t$. Since both have rank $r_g^t$ in the corresponding bids,
the selected ranks are again identical. The sets of remaining items then
become identical, so the executions coincide thereafter. Thus, agent $i$
receives the same bundle in both iterations, contradicting the strict
profitability of the manipulation. Hence, such a turn $q$ exists.

By the same argument, at the beginning of turn $q$, the sets of available items have the form $C_q\cup\{g\}$ and $C_q\cup\{g'\}$ in iterations $t$ and $t+1$, respectively.

Note that turn $q$ cannot be a turn of agent $i$. Indeed, all items in $C_q$ have the same ranks in $\bid_i^t$ and $\bid_i^{t+1}$, while $g$ in $\bid_i^t$ and $g'$ in $\bid_i^{t+1}$ both have rank $r_g^t$. Hence, the available items have the same set of ranks in both executions, so agent $i$ would select an item of the same rank, contradicting the definition of $q$.

Thus, some agent $j\neq i$ acts at turn $q$. Since the swap is item-preserving, agent $j$ cannot select $g'$ in iteration $t+1$. Hence, she selects some item $g''\in C_q$. Since her bid is unchanged, $g''$ is the highest-ranked item of $C_q$ in both executions. If agent $j$ also selected $g''$ in iteration $t$, the selected ranks would be equal, contradicting the definition of $q$. Therefore, she must select $g$ in iteration $t$. It follows that $g$ is ranked above $g''$ in her bid, and thus
    \[
    \phi_q(\bid^{t+1})
    >
    \phi_q(\bid^t).
    \]
    Since all previous coordinates are identical,
    \[
    \Phi(\bid^{t+1})
    \succ_{\mathrm{lex}}
    \Phi(\bid^t).
    \]
    
We can now conclude from the three cases above. Case~(1) is impossible, while Cases~(2) and~(3) both yield $
\Phi(\bid^{t+1}) \succ_{\mathrm{lex}} \Phi(\bid^t)$.
Hence, every strictly profitable item-preserving swap strictly increases $\Phi$ lexicographically. 

It remains to bound the number of possible values of $\Phi$. Consider any turn $s\in\{1,\ldots,m\}$. At turn $s$, only $s-1$ items have already been selected. Therefore, at least one of the first $s$ items in the current bid of agent $\pi_s$ is still available, and thus
\[
1\leq \phi_s(\bid)\leq s.
\]
Hence, $\phi_s$ has at most $s$ possible values, and the total number of possible vectors $\Phi$ is at most
\[
\prod_{s=1}^m s
=
m!.
\] 
Since $\Phi$ strictly increases after every manipulation and has at most $m!$ possible values, every execution contains at most $m!-1$ manipulations. Consequently, $\mathrm{IRR}$ terminates at a PNE under item-preserving swaps after at most $m!-1$ manipulations.

\end{proof}

Finally, we turn our attention to convergence of $\mathrm{IRR}$ under adjacent swaps. We prove that it converges for $n=2, m=4$ (see Appendix~\ref{appendix: adj swap}), but may cycle for $n \geq 3$. Though we did not manage to prove it, we are fairly certain that $\mathrm{IRR}$ actually converges under adjacent swaps for $n=2$ and any $m$, and leave it as a conjecture.

\begin{restatable}{theorem}{adjswap}
    When $n=2$ and $m=4$, $\mathrm{IRR}$ converges to a PNE under adjacent swaps.
\end{restatable}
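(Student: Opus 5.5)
Our plan is a finite case analysis organised around the structure of RR with $n=2$, $m=4$. The picking sequence is $(1,2,1,2)$, so agent~1 picks at turns 1 and 3 and agent~2 at turns 2 and 4. Agent~2's last pick is forced. Her bundle is therefore determined by her turn-2 choice, and her bid matters only through which item she takes at turn 2 and which she takes at turn 3 among the two remaining items after agent~1's turn-3 pick (equivalently, her preference among the remaining items). Likewise, agent~1's bundle depends only on her top item and on her relative ranking of the items left at turn 3. We would start by making this precise. Each bid can be compressed to the features that matter: the top item, and the relative order of the remaining items as they can appear at the agent's later turn. An adjacent swap changes only one adjacent pair. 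So a strictly profitable adjacent swap by agent~1 must either exchange her top two items (changing her first pick) or exchange an adjacent pair that decides her turn-3 choice. For agent~2 it must either move the first item not taken by agent~1 at turn 1 past a neighbour, or reorder two adjacent items that both remain at turn 4.

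Second, we would find a potential argument rather than enumerate all $24^2$ profiles blindly. A natural first attempt, inspired by the proof of the item-preserving theorem above, is the lexicographic vector $\Phi(\bid)=(\phi_1,\ldots,\phi_4)$, or the Kendall tau distance $d_{KT}(\bid_1,\bid_2)$. Neither is expected to be monotone in general, since an adjacent swap that is not item-preserving can lower the rank picked at some turn. Instead we would track the allocation sequence. We would show that once agent~1 changes her first pick, the item previously at her top becomes one agent~2 grabs only if agent~2 ranks it first among what is available. Then, with only four items and truthful initial values, we would bound the number of times the pair (agent~1's first pick, agent~2's first pick) can change. We expect that each agent's utility, once improved, can be lowered by the other only in a limited pattern: the cycles of Proposition~\ref{proposition: cycle no restriction} use non-adjacent jumps such as moving $d$ from position 2 to the top while rearranging the rest.

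Third, for whatever residual configurations the structural argument leaves open, we would reduce to a finite check. Fix the true orders up to renaming: by symmetry, agent~1's true order can be taken to be $abcd$, which leaves 24 choices for agent~2's true order. The utilities matter only through comparisons of two-item bundles, so there are finitely many consistent sign patterns of $v_i(\{x,y\})-v_i(\{z,w\})$. For each instance, we would explore the graph of bid profiles reachable from $\succ$ by strictly profitable adjacent-swap best responses (including all tie-breaking choices among best responses and among which agent moves) and verify that it is acyclic. Using the compressed bid representation from the first step keeps this graph small enough to describe by hand in the appendix.

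The main obstacle is this exhaustive step: ensuring that every branch of the dynamics is covered, including all possible orderings of movers and all ties between best responses, and all consistent cardinal utility patterns. The cleanest route would be to find a lexicographic quantity that increases under every profitable adjacent swap in this small case, for instance agent~1's turn-1 rank followed by the turn-2 pick's rank in agent~2's bid. We would first try to prove such monotonicity directly, and fall back to the case enumeration only where it fails.
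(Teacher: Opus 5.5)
Your proposal is a search plan, not yet a proof, and it lacks the idea that makes termination easy.

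\textbf{What is missing in your proposal.} Your first step stops at describing which swaps can matter. It also contains a slip: agent~2 makes no choice after turn~2, since only one item is left at turn~4. So the disjunct about reordering ``two adjacent items that both remain at turn~4'' is vacuous. The potential you finally suggest is not shown to be monotone. Its first coordinate is identically $1$, because agent~1 always takes her top-ranked item at turn~1. Its second coordinate, the rank of agent~2's turn-2 pick, can move in either direction when agent~1 exchanges her top two items. The bound on how often the pair of first picks can change is announced but never argued. Everything is then deferred to an enumeration you have not carried out. A complete computer check over the $24$ orders of agent~2, the consistent bundle-comparison patterns, and all branches of the dynamics would be a legitimate proof of this finite statement. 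However, your ``compressed representation'' does not shrink it. With four items, the top item plus the order of the other three is the whole bid, and the available adjacent swaps depend on the full bid. A hand-checkable appendix is therefore not in sight.

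\textbf{The idea that closes the gap.} Take $\succ_1 = abcd$.
\begin{itemize}
\item \emph{Agent~1 keeps $d$ last.} While $d$ is last in her bid, the only swap lifting it exchanges $d$ with the item $g$ in third position. This cannot change her turn-1 pick. At turn~3 it can only replace $g$ by $d$, which is strictly worse. Hence $d$ stays last in agent~1's bid forever.
\item \emph{Agent~2 always receives $d$.} If she does not take $d$ at turn~2, agent~1 leaves it at turn~3. So agent~2's bundle is fixed by her turn-2 pick alone, which is your own first observation.
\item \emph{Agent~2 can only push $d$ down.} By induction, her bid is $\succ_2$ with $d$ pushed down and all other items in their true relative order. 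A swap of two non-$d$ items can only turn $\{g,d\}$ into $\{g',d\}$ with $g \succ_2 g'$. Lifting $d$ above $g$ turns $\{g,d\}$ into $\{d,g'\}$ with $g'=g$ or $g \succ_2 g'$. So every profitable move of agent~2 pushes $d$ down one position, and she moves at most three times.
\item \emph{Agent~1 moves finitely often.} Whenever agent~2's bid is fixed, each move of agent~1 strictly increases agent~1's true utility. She can therefore make only finitely many consecutive moves.
\end{itemize}
Together these give termination at an adjacent-swap PNE, with no global potential and no enumeration.
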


\begin{conjecture}
    When $n=2$, and $m \geq 5$, $\mathrm{IRR}$ converges to a PNE under adjacent swaps.
\end{conjecture}

\begin{proposition} 
    $\mathrm{IRR}$ may cycle under adjacent swaps, even when $n=3$ and $m=5$.
\end{proposition}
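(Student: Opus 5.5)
This statement is existential, so the plan is to exhibit an explicit instance and an execution of $\mathrm{IRR}$ that returns to an earlier bid vector, in the same style as the earlier cycle constructions. The instance will have $\agents=\{1,2,3\}$, $m=5$ items $\{a,b,c,d,e\}$, and the picking sequence $(1,2,3,1,2)$. With this sequence agents $1$ and $2$ pick twice and agent $3$ picks once. Only agents $1$ and $2$ are potential manipulators in the sense of Observation~\ref{observation: unique manipulator}; agent $3$ always takes her top available item, and her truthful bid is then a best response. By Observation~\ref{observation: unique manipulator}, two potential manipulators are needed, so $m=5$ is the smallest size worth searching.

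The mechanism of the cycle should differ from the two-agent case in the following way. Agent~$3$ is a fixed third party whose bid lies between the first picks and the second picks. An adjacent swap by agent~$1$ changes which item agent~$3$ takes, and that in turn changes what agent~$2$ can get at turn~5, and symmetrically. I would look for a four-step cycle $\overset{1}{\rightarrow}\overset{2}{\rightarrow}\overset{1}{\rightarrow}\overset{2}{\rightarrow}$. In it, agent~$1$ alternates the adjacent pair at the top of her bid, for example between $xy\ldots$ and $yx\ldots$. Agent~$2$ alternates one adjacent pair in her bid in response. The design targets are as follows:
\begin{itemize}
\item each step strictly increases the mover's true additive utility;
\item each step is a best response among all adjacent swaps of the current bid, which is only $m-1=4$ alternatives per step;
\item the fourth step returns the first bid vector, reached from the truthful profile $\bid^0=\succ$ or after a short prefix.
\end{itemize}
I would choose integer utilities $1$ to $5$ consistent with each ranking, as in the proofs of the earlier propositions.

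The construction proceeds in three steps:
\begin{enumerate}
\item Fix the truthful rankings so that agent~$1$ and agent~$2$ disagree on the top pair. Agent~$3$'s favourite item should be one that both of them want in their second pick.
\item Write out the $\mathrm{RR}$ allocations of the four profiles in the cycle.
\item Tune the utilities so that each mover's gain is strict, for example by making the contested second-pick item worth more than the difference between the top items.
\end{enumerate}
Verifying the best-response condition is routine: at each step, compare the mover's utility under the four possible adjacent swaps.

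The main obstacle is finding such an instance by hand. Each move must be an improvement, and it must also be the best among the adjacent swaps while the system still returns to its starting point. That interaction is not obvious, given that the two-agent case is conjectured to converge. If manual search fails, I would run an exhaustive computer search over rankings and utilities for $n=3$, $m=5$, simulating best-response dynamics restricted to adjacent swaps from the truthful profile. I would then report the discovered cycle in the paper's $\{\cdot\}[\cdot]$ notation.
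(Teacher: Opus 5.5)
Your proposal is a search strategy, not a proof. The statement is existential, so its proof consists entirely of an explicit instance together with a verified execution of $\mathrm{IRR}$ that revisits a bid vector. You never produce such an instance, and nothing in your outline guarantees that the search succeeds. As you note yourself, convergence is conjectured for $n=2$, so a priori cycles might not exist for $n=3$, $m=5$ either. Your preliminary remarks are correct: agent~$3$ picks only at turn~$3$, so she never has a profitable deviation, and by Observation~\ref{observation: unique manipulator} $m=5$ is the smallest size with two potential manipulators when $n=3$. But these remarks only narrow the search space; they say nothing about existence.

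The missing ingredient is exactly what the paper supplies: the instance $\succ=(deacb,badec,ebdac)$ and an execution starting from the truthful profile. In it, agent~$2$ moves first and then the two agents alternate, with agent~$1$ toggling $de\leftrightarrow ed$ and agent~$2$ toggling $ba\leftrightarrow ab$ at the top of their bids, while agent~$3$ stays at $ebdac$. Your structural guess matches this closely: top-pair toggles, with agent~$3$'s single pick transmitting one agent's first-pick change to the other agent's second pick. However, your heuristic that agent~$3$'s favourite should be a common second choice is not what happens there, since agent~$2$ ranks $e$ fourth.

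To finish, you must write such an instance down and check, for each of the four moves, that it is strictly improving and beats the other three adjacent swaps. For instance, at $(deacb,abdec,ebdac)$ agent~$1$'s alternatives $daecb$, $decab$, $deabc$ yield $\{d,c\}$, $\{d,c\}$, $\{d,b\}$, whereas $edacb$ yields $\{e,d\}$. It is also worth aiming, as the paper's instance does, for moves in which the new bundle beats every alternative item-by-item under the true ranking (e.g.\ $\{e,d\}$ versus $\{d,c\}$ for agent~$1$). That way the cycle holds for every consistent additive utility rather than depending on the tuned numbers you propose.
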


\begin{proof}
    Consider an instance $\langle \agents, \items, \succ, v \rangle$ such that $\agents = \{1, 2, 3\}$, $\items = \{a, b, c, d, e\}$, and $\succ = (deacb, badec, ebdac)$.
    Consider the following utilities:\\[-15pt]
    \begin{center}
    \begin{tabular}{c|ccccc}
        $v_1$ & $a:3$ & $b:1$ & $c:2$ & $d:5$ & $e:4$ \\ \hline
        $v_2$ & $a:4$ & $b:5$ & $c:1$ & $d:3$ & $e:2$ \\ \hline
        $v_2$ & $a:2$ & $b:4$ & $c:1$ & $d:3$ & $e:5$
    \end{tabular}
    \end{center}
    
    \noindent The following execution of $\mathrm{IRR}$ enters a cycle:
    $$
    \begin{aligned}
    \{deacb, &badec, ebdac\}[da: 8, bc: 6, e: 5]  \\
        &\overset{2}{\rightarrow} \{deacb, abdec, ebdac\}[dc: 7, ab: 9, e: 5] \\
        &\overset{1}{\rightarrow} \{edacb, abdec, ebdac\}[ed: 9, ac: 5, b: 4] \\
        &\overset{2}{\rightarrow} \{edacb, badec, ebdac\}[ea: 7, bc: 6, d: 3] \\
        &\overset{1}{\rightarrow} \{deacb, badec, ebdac\}[da: 8, bc: 6, e: 5]
    \end{aligned}
    $$
\end{proof}
\section{Partial Information} \label{section: partial info}

In Section~\ref{section: full info}, we assumed that agents had full information and could therefore observe the preferences of all other agents. In this section, we relax this assumption by considering a partial-information setting, in which agents only observe the allocation returned by the round-robin picking sequence at each iteration. We assume that, based on the observed allocation, agents infer a set of possible worlds, namely the worlds that are consistent with the observed allocation under the round-robin picking sequence. 

\begin{definition}[information set]
Given an instance $\langle \agents,\items,\succ,v\rangle$, a sequence $\pi \in \agents^m$, an agent $i\in\agents$, a bid $\indbid \in \bids$, and an allocation $A$, the \emph{information set} of agent $i$, denoted by $\mathcal{I}_i(\indbid,A)$, is the set of bid profiles of the other agents consistent with $A$:
\[
\mathcal{I}_i(\indbid,A)
=
\left\{
\bid_{-i}
\;\middle|\;
\bid \in \bids^{n},\ \bid_i = \indbid,\ \mathrm{PS}_\pi(\bid)=A
\right\}.
\]
\end{definition}

Although defined independently of the iteration, this information set
is used in the iterated setting. At iteration $t$, after observing
$A^t=\mathrm{PS}_\pi(\bid^t)$, agent $i$ considers
$\mathcal{I}_i(\bid_i^t,A^t)$ when deciding whether to manipulate.
We assume that agents are memoryless: they only consider their current
bid and the observed allocation, ignoring allocations from previous iterations. We study three behaviours guiding agents' decisions to manipulate.

\begin{definition}[behaviours under partial information]
Given an instance $\langle \agents,\items,\succ,v\rangle$, a sequence
$\pi\in\agents^m$, and an iteration $t$, let $A^t = \mathrm{PS}_\pi(\bid^t)$ be the observed
allocation and $\bid_i^t$ the current bid of agent $i\in\agents$. If agent
$i$ decides to manipulate, her manipulation $\bid_i^{t+1}$ may follow one
of the following behaviours:

\noindent $\bullet$ Agent $i$ is \emph{optimistic} if
$$   \bid_i^{t+1}
    \in
    \argmax\limits_{\indbid\in\bids}
    \;
    \max\limits_{\mathbbm{c}\in\mathcal{I}_i(\bid_i^t,A^t)}
    v_i\!\left(\mathrm{PS}_{\pi}(\indbid,\mathbbm{c})(i)\right).$$

\noindent $\bullet$  Agent $i$ is an \emph{expected-utility maximizer} if
    $$
    \bid_i^{t+1}
    \in
    \argmax_{\indbid\in\bids}
    \;
    \frac{1}{|\mathcal{I}_i(\bid_i^t,A^t)|}
    \sum_{\mathbbm{c}\in\mathcal{I}_i(\bid_i^t,A^t)}
    v_i\!\left(\mathrm{PS}_{\pi}(\indbid,\mathbbm{c})(i)\right).
    $$

\noindent $\bullet$ Agent $i$ is \emph{cautious} if $\bid_i^{t+1}
\in
\mathrm{CM}^t_i$, where 
\[
\begin{aligned}
CM_i^t=\bigl\{\indbid\in\bids\mid{}
&\forall \mathbbm{c}\in\mathcal{I}_i(\bid_i^t,A^t), v_i\!\left(\mathrm{PS}_{\pi}(\indbid,\mathbbm{c})(i)\right)
\geq v_i(A^t(i)),\\
\mbox{and } &\exists \mathbbm{c}\in\mathcal{I}_i(\bid_i^t,A^t), v_i\!\left(\mathrm{PS}_{\pi}(\indbid,\mathbbm{c})(i)\right)
> v_i(A^t(i))
\bigr\}.
\end{aligned}
\]
\end{definition}

In words, an optimistic agent evaluates each possible manipulation according to the most favorable world in her information set, and chooses one maximizing the utility she could obtain in such a world. An expected-utility maximizer instead evaluates each manipulation by its average utility over all possible worlds in her information set, assuming them to be equally likely, and chooses one maximizing this expected utility. Finally, a cautious agent only considers manipulations that are never detrimental in any possible world and strictly beneficial in at least one of them, and chooses among these a manipulation maximizing her worst-case utility. This last behaviour is more conservative, or pessimistic, and is in line with some works on iterative voting \cite{Meir2015, Surugue_Destercke2026}.

\begin{example}
Consider an instance $\langle \agents, \items, \succ, v \rangle$ with
$\agents=\{1,2\}$, $\items=\{a,b,c\}$, and $\succ=(abc,bca)$.
Let $\bid^0=\succ$ and $A^0:=\mathrm{RR}(\bid^0)$, with
$A^0(1)=\{a,c\}$ and $A^0(2)=\{b\}$.
Assume $v_1(a)=3$, $v_1(b)=2$, and $v_1(c)=1$, so that
$v_1(A^0(1))=4$. Agent $1$'s information set is $
\mathcal{I}_1(\bid_1^0,A^0)=\{\{abc\},\{bac\},\{bca\}\}$. Indeed, given her own bid $\bid_1^0=abc$ and the observed allocation $A^0$,
agent $1$ cannot distinguish between these three possible bids of agent $2$,
since each of them is consistent with the same allocation $A^0$. Consider the manipulation $\bid_1^1=bac$. Against these three possible bids
of agent $2$, agent $1$ obtains utilities $3$, $3$, and $5$, respectively. For agent $1$:
\begin{itemize}
    \item if \emph{optimistic}, the manipulation is strictly profitable, since its best-case utility is $5>v_1(A^0(1))=4$.
    \item if an \emph{expected-utility maximizer}, it is not profitable since its expected utility is $(3+3+5)/3=11/3<v_1(A^0(1)) = 4.$
    \item If \emph{cautious}, it is not admissible since it yields utility $3<v_1(A^0(1))=4$ in some possible worlds.
\end{itemize}
\end{example}

\begin{restatable}{proposition}{cycleoptimistic} \label{proposition: cycle optimistic}
    $\mathrm{IRR}$ may cycle with optimistic agents.
\end{restatable}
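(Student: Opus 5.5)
This is an existence claim, so the proof will be an explicit counterexample in the style of Proposition~\ref{proposition: cycle no restriction}. I will give a small instance and an execution of $\mathrm{IRR}$ in which every agent is optimistic and the bid profile eventually repeats. I would aim for $n=2$ and, if possible, $m=4$, since the paper already shows that $m=n+1$ rules out cycling via Observation~\ref{observation: unique manipulator}. The natural starting point is the full-information cycle of Proposition~\ref{proposition: cycle no restriction}: instance $\succ=(bdac,dcba)$ with the same utilities.

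The key fact linking the two settings is that the true opponent bid always belongs to the information set. So an optimistic agent's best-case value for any bid is at least her full-information value. This means that if some bid gives more than the current utility in the true world, the optimistic agent does want to manipulate. The catch is that her chosen bid maximizes over the most favourable world, which need not be the true one.

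For each step of the execution I will compute three things:
\begin{enumerate}
\item the information set $\mathcal{I}_i(\bid_i^t,A^t)$, i.e.\ all opponent rankings consistent with the observed RR allocation. For $m=4$ this is a short enumeration: it only constrains the opponent's choices at her picking turns, each relative to the items still available.
\item the maximal optimistic value $\max_{\indbid}\max_{\mathbbm{c}}v_i$.
\item a bid attaining it, chosen so that, evaluated against the true opponent bid, it produces the next profile of the desired cycle.
\end{enumerate}
Since the argmax is generally not unique, I only need to exhibit that the chosen bid lies in it. I will also note that a manipulation occurs exactly when this optimistic value exceeds $v_i(A^t(i))$.

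The main obstacle is that the optimistic best value is often the ``dream'' bundle of the agent's top items. It is achievable in some world but typically leads, in the true world, to a profile different from the full-information cycle. So I do not expect to copy the earlier cycle verbatim. Instead, I will search for the cycle directly under optimistic dynamics. Agent $i$ will pick, among the bids achieving her dream bundle in some consistent world, one whose real outcome loses her an item. The other agent then responds the same way, and the profile returns to a previous one. Concretely, I would first try $n=2$, $m=4$, with agent~1 reporting the opponent's top item first, gambling that her own top item survives, and agent~2 doing the symmetric gamble. If no four-item example closes up, I would increase to $m=5$ or $n=3$ and present whatever explicit cycle is found, verified step by step in the paper's $\{\cdot\}[\cdot]$ notation.
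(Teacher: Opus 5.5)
The proposal does not yet prove anything. For an existence claim like this, the explicit instance together with the step-by-step check of information sets and optimistic argmaxes \emph{is} the proof, and you defer both (``I would first try\ldots\ If no four-item example closes up, I would increase to $m=5$ or $n=3$'').

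Your search is also steered by a false premise. Observation~\ref{observation: unique manipulator} is a full-information fact: with the other bids fixed, every best response strictly raises the unique manipulator's \emph{true} utility, so she can never return to an earlier bid. An optimistic agent instead optimizes against the most favourable world of her information set. Her move can therefore lower her true utility and give her a reason to move back. Hence $m=n+1$ is not excluded, and a single agent can cycle on her own; you do not need the other agent to ``respond the same way''. This is exactly the paper's proof, with $n=2$, $m=3$, $\succ=(bac,abc)$:
\begin{itemize}
\item From $\{bac,abc\}[bc:4,a:3]$, agent~1's information set is $\{abc,acb,bac\}$. The bid $abc$ yields $\{a,b\}$ (value $5$) in world $acb$, but gives $[ac:3,b:2]$ in the true world.
\item Her information set is now $\{abc,bac,bca\}$. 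The bid $bac$ yields $5$ in world $bca$, which restores the initial profile.
\end{itemize}

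You also dismissed your natural starting point too quickly: the cycle of Proposition~\ref{proposition: cycle no restriction} is an optimistic cycle verbatim.
\begin{itemize}
\item At $\{bdac,dcba\}$, agent~1's information set is the rankings placing $d$ above $a$ and $c$. Her best conceivable bundle $\{b,d\}$ (value $7>6$) is obtained by $dbac$ in the true world.
\item At $\{dbac,dcba\}$, agent~2's set is $\{dbac,dbca,dcba\}$, so $d$ is lost in every world. Her best case is $\{b,c\}$ (value $5>4$), obtained by $bacd$ in the true world.
\item At $\{dbac,bacd\}$, agent~1's set is the rankings placing $b$ above $a$ and $c$, and $bdac$ gives $7>5$ in the true world.
\item At $\{bdac,bacd\}$, agent~2's set is $\{badc,bdac,bdca\}$, and $dcba$ gives $\{d,c\}$ (value $7>4$) in the true world, closing the cycle.
\end{itemize}
In every step the full-information best response already attains the global best-case value over the information set, so it lies in the optimistic argmax. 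Writing out this check, or the paper's three-item one, is what is missing.
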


\begin{proof}
Consider an instance $\langle \agents, \items, \succ, v \rangle$ with $\agents = \{1,2\}$, $\items = \{a,b,c\}$, and $\succ = (bac,abc)$. Assume the utilities are as follows:\\[-15pt]
    \begin{center}
    \begin{tabular}{c|ccc}
        $v_1$ & $a:2$ & $b:3$ & $c:1$ \\ \hline
        $v_2$ & $a:3$ & $b:2$ & $c:1$
    \end{tabular}
    \end{center}
    
    The following execution of $\mathrm{IRR}$ enters a cycle:
    $$
    \begin{aligned}
    \{bac, abc\}[bc: 4, a: 3] 
        &\overset{1}{\rightarrow} \{abc, abc\}[ac: 3, b: 2] \\
        &\overset{1}{\rightarrow} \{bac, abc\}[bc: 4, a: 3]
    \end{aligned}
    $$
\end{proof}

In the proof above, the agent’s first manipulation actually decreases her utility, even though she initially expects to benefit from it. This is because she is optimistic and bases her decision on the best-case scenario, which does not correspond to the true bid. Thus, she gets lower utility and is incentivized to manipulate again, eventually leading to a cycle. The details explaining these choices are given in Appendix~\ref{appendix: missing proofs section 4}. 

Note that this example emphasizes a dynamic that could not occur in the full-information setting, namely a cycle occur with only one agent deviating.

We obtain a similar result for expected-utility maximizer. The proof is in Appendix~\ref{appendix: missing proofs section 4}.

\begin{restatable}{proposition}{cycleexpected} \label{proposition: cycle expected}
    $\mathrm{IRR}$ may cycle with expected-utility maximizer agents.
\end{restatable}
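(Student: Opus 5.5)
The plan is to exhibit an explicit instance, in the spirit of Proposition~\ref{proposition: cycle optimistic}. The mechanism to exploit is the same one that produced the optimistic cycle: an agent's decision rests on her information set, not on the true bids of the others. So I will look for a cycle in which a single agent, say agent~$1$, alternates between two bids $\indbid$ and $\indbid'$. The requirements are as follows.
\begin{itemize}
\item At $\indbid$, the observed allocation produces an information set $\mathcal{I}_1(\indbid, A)$ whose uniform average favours $\indbid'$ strictly.
\item Against the true bid of agent~$2$, the switch to $\indbid'$ actually lowers her utility.
\item The new allocation then produces an information set $\mathcal{I}_1(\indbid', A')$ whose average strictly favours switching back to $\indbid$, and $\indbid$ is an argmax of this average.
\end{itemize}
As in the optimistic case, the cycle then has length two and involves only one deviating agent. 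This is possible precisely because utilities observed under partial information are not monotone along the dynamics.

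I would work with $n=2$ under $\mathrm{RR}$, so that only agent~$1$ (who picks at turns $1,3,\ldots$) can be a potential manipulator. I would first try $m=3$ and move to $m=4$ if needed. The first attempt is to reuse the instance of Proposition~\ref{proposition: cycle optimistic} and only adjust $v_1$. However, for $\succ=(bac,abc)$ with $v_1=(a{:}2,b{:}3,c{:}1)$, the information set at $bac$ is $\{abc,bac,acb\}$. Switching to $abc$ yields utilities $3,3,4$ there, an average of $10/3<4$. Since the three-item case is nearly forced, I expect to need $m=4$.

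In that case agent~$2$'s pick at turn $2$ pins down little of her ranking, which gives large information sets. The expected-utility trade-off can then be tuned through the cardinal values $v_1$. Concretely, I will do the following.
\begin{enumerate}
\item Choose $\succ_2$ and two candidate bids $\indbid,\indbid'$ of agent~$1$.
\item Enumerate, for each bid, all $24$ rankings of agent~$2$ and keep those consistent with the observed allocation.
\item Compute the average utility of each of agent~$1$'s $24$ possible bids over that set.
\item Solve the resulting linear inequalities in $v_1$ for strict argmax conditions in both directions, checking that the utility of $\indbid'$ is lower than that of $\indbid$ under the true $\succ_2$.
\end{enumerate}
For the process to start, the truthful bid $\succ_1$ should be one of the two bids in the cycle.

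The main obstacle is the argmax requirement. It is not enough that $\indbid'$ beats the current bid on average: it must maximise expected utility over all of $\bids$, and symmetrically on the way back. Different bids that induce the same pick pattern against the information set tie, so I would require only that $\indbid'$ be in the argmax, which the definition allows. Ties with genuinely different bids have to be excluded or tolerated by choosing suitable weights. I would settle this by direct enumeration, which is routine but tedious, and defer the verification table to Appendix~\ref{appendix: missing proofs section 4}, as was done for the optimistic case.
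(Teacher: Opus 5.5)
\textbf{The gap: no witness is given.} The statement is existential, so its proof \emph{is} an explicit instance. For that instance you must also check that each move strictly raises expected utility and lies in the argmax over all bids. Your proposal supplies neither. It stops at a search procedure (enumerate information sets, solve linear inequalities in $v_1$) whose success you have not established, so nothing is proved yet. Your intended shape does match the paper's proof. The paper takes $n=2$, $m=5$, $\succ=(edcab,edbac)$, $v_1=(a{:}2,b{:}1,c{:}3,d{:}4,e{:}5)$ and $v_2=(a{:}2,b{:}3,c{:}1,d{:}4,e{:}5)$. There agent~$1$ alone alternates $edcab\rightarrow decab\rightarrow edcab$, with true utility $10\rightarrow 9\rightarrow 10$.

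\textbf{Your one computation is wrong, and it led you to drop a route that works.} In the world $acb$, once agent~$1$ takes $a$, agent~$2$ takes $c$, so agent~$1$ gets $\{a,b\}$, worth $5$, not $4$. More importantly, you announce that you will adjust $v_1$ but never do so.

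With $m=3$, only agent~$1$'s top item matters. Write $A,B,C$ for $v_1(a),v_1(b),v_1(c)$.
\begin{itemize}
\item At bid $bac$ the information set is $\{abc,acb,bac\}$. Top items $a,b,c$ give expected utilities $A+\frac{B+2C}{3}$, $B+C$ and $C+\frac{A+2B}{3}$.
\item At bid $abc$ the information set is $\{abc,bac,bca\}$. Top items $a,b,c$ give $A+C$, $B+\frac{A+2C}{3}$ and $C+\frac{2A+B}{3}$.
\end{itemize}
Since $B>A>C$, the move back to top item $b$ is always the unique maximizing top item. The first move to top item $a$ is the unique maximizing one exactly when $3A>2B+C$.

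So keep $\succ=(bac,abc)$ from the optimistic instance and take $v_1=(a{:}5,b{:}6,c{:}1)$. This gives $\{bac,abc\}[bc{:}7,a]\overset{1}{\rightarrow}\{abc,abc\}[ac{:}6,b]\overset{1}{\rightarrow}\{bac,abc\}[bc{:}7,a]$. The expected utilities are $23/3>7$ and $25/3>6$. This is a smaller witness than the paper's.

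A minor point: for $m\geq 4$, agent~$2$ picks at turns $2$ and $4$, so she is also a potential manipulator, contrary to your remark. This is harmless, since the deviating agent is chosen arbitrarily.
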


We now turn to the case of cautious agents. Before analyzing their behaviour, we introduce the notion of bluff world and adapt a known result.

\begin{definition}[bluff world]
Given an instance $\langle \agents,\items,\succ,v\rangle$, a sequence
$\pi\in\agents^m$, an agent $i\in\agents$, and a world
$\mathbbm{c}\in\bids^{n-1}$, let $\rho=(\rho_1,\ldots,\rho_m)$ denote the
order in which items are selected under
$\mathrm{PS}_\pi(\succ_i,\mathbbm{c})$. The \emph{bluff world} associated
with $\mathbbm{c}$, denoted by $\mathrm{bw}(\mathbbm{c})$, is the world
in which every agent $j\in\agents\setminus\{i\}$ reports the ranking
$\rho_1\cdots\rho_m$.
\end{definition}

Note that $\mathrm{bw}(\mathbbm{c})$ differs from the bluff profile of
\cite{Aziz_et_al2017}: only agents in $\agents\setminus\{i\}$ report the
ranking induced by $(\succ_i,\mathbbm{c})$, while agent $i$ reports
truthfully.

\begin{proposition} \label{prop: no manip}
Consider an instance $\langle \agents,\items,\succ,v\rangle$, a
round-robin sequence $\pi$, an agent $i\in\agents$, and a world
$\mathbbm{c}$. Against the bluff world $\mathrm{bw}(\mathbbm{c})$, agent
$i$ has no profitable deviation from truthful reporting.
\end{proposition}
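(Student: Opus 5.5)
Let $\rho=(\rho_1,\ldots,\rho_m)$ be the picking order under $\mathrm{PS}_\pi(\succ_i,\mathbbm{c})$. In $\mathrm{bw}(\mathbbm{c})$ every agent $j\neq i$ reports the common ranking $\rho_1\cdots\rho_m$. We first check that, when agent $i$ reports truthfully against $\mathrm{bw}(\mathbbm{c})$, the execution reproduces $\rho$ exactly. The induction is on turns. Suppose $\rho_1,\ldots,\rho_{s-1}$ have been picked. If an agent $j\neq i$ acts, she takes her highest remaining item in $\rho$-order, which is $\rho_s$. If agent $i$ acts, she faces the same available set as in $\mathrm{PS}_\pi(\succ_i,\mathbbm{c})$ and reports the same ranking, so she picks $\rho_s$. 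Hence agent $i$ obtains $B:=\{\rho_s : \pi_s=i\}$, the same bundle as under $(\succ_i,\mathbbm{c})$.

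Next, fix an arbitrary report $\indbid$ of agent $i$ against $\mathrm{bw}(\mathbbm{c})$. Let $s_1<\cdots<s_k$ be her turns, and let $B'=\{b'_1,\ldots,b'_k\}$ be her items, with $b'_\ell$ picked at turn $s_\ell$. Since all other agents pick greedily along $\rho$, the plan is a domination argument. We aim to show that $B$ dominates $B'$ under the responsive set extension: there is a bijection $\sigma:B'\to B$ with $\sigma(x)\succeq_i x$ for all $x$. Additivity and consistency with $\succ_i$ then give $v_i(B)\ge v_i(B')$. It is enough to show, for each $\ell$, that the $\ell$-th best item of $B'$ under $\succ_i$ is weakly worse than $\rho_{s_\ell}$ under $\succ_i$.

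The key counting step is as follows. Before turn $s_\ell$, exactly $s_\ell-1$ items are gone, of which $\ell-1$ were taken by $i$. The other $s_\ell-\ell$ were taken by agents $j\neq i$, and these are always the $\rho$-earliest items not yet taken by $i$. Now take any item $x$ with $x\succ_i \rho_{s_\ell}$. By the truthful greedy choice at turn $s_\ell$ in the original execution, $x$ is among $\rho_1,\ldots,\rho_{s_\ell-1}$. Call this set $E_\ell$; it consists of $i$'s first $\ell-1$ truthful picks together with $s_\ell-\ell$ items taken by others. We must show that agent $i$ can collect at most $\ell-1$ items that are $\succ_i$-better than $\rho_{s_\ell}$. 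Suppose instead she obtains $\ell$ such items, all lying in $E_\ell$. Consider the last of these, picked at some turn $s_p$ with $p\ge \ell$. At that turn the other agents have already consumed $s_p-p$ items in $\rho$-order, skipping only items agent $i$ had taken. Because agent $i$ has taken at most $p-1$ items so far, the others have exhausted all of $\rho_1,\ldots,\rho_{s_p-1}$ except $i$'s picks. Since $s_p\ge s_\ell$, every item of $E_\ell$ not already taken by $i$ is gone, which contradicts picking a new item of $E_\ell$ there.

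The delicate point, and the main obstacle I expect, is making this counting precise. The others' greedy consumption along $\rho$ is perturbed by $i$'s deviating picks, which can make the others reach later $\rho$-items earlier. I would formalize it through an invariant: after $q$ total picks in the deviating run, the set of items taken by agents other than $i$ is a prefix of $\rho$ minus $i$'s items, of size $q-(\text{number of }i\text{'s picks})$. With this invariant, the count of $i$'s picks within any prefix $\rho_1\cdots\rho_{s-1}$ by turn $s$ is bounded by her number of turns before $s$. This is precisely the bound needed, because $i$'s truthful picks within that prefix attain it with equality. Combining the per-$\ell$ bounds gives the domination, so truthful reporting is a best response against $\mathrm{bw}(\mathbbm{c})$.
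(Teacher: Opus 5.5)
Your argument is correct in substance, and it takes a different route from the paper. The paper gives no proof of its own: it says the proof follows that of Theorem~2 in Aziz et al.\ (the bluff profile is a PNE), adapted to a truthful agent $i$. Then, in the cautious-agents theorem, it obtains strictness separately, by merging the other agents into one representative agent and invoking Bouveret and Lang's uniqueness of the optimal bundle for two agents.

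Your responsive-dominance argument has three advantages. It is self-contained. It uses nothing specific to round robin, only that all other agents pick greedily along the same ranking $\rho$. And it proves more than the statement: matching the $\ell$-th best item of $B'$ with $\rho_{s_\ell}$, if $B'\neq B$ then at least one comparison is strict, so $v_i(B')<v_i(B)$ for every additive $v_i$ consistent with $\succ_i$. That is exactly the ``unique optimal bundle'' fact the paper imports, so your proof would make the representative-agent step unnecessary.

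One sentence needs repair. The claim ``Because agent $i$ has taken at most $p-1$ items so far, the others have exhausted all of $\rho_1,\ldots,\rho_{s_p-1}$ except $i$'s picks'' is false. An upper bound on $i$'s picks points the wrong way, since her earlier picks may lie outside that prefix. For instance, take two agents with $i=1$. Let $i$ take $\rho_1$, the other agent $\rho_2$, then $i$ take $\rho_5$ and the other $\rho_3$. At turn $5$ (so $p=3$), the item $\rho_4$ is still available although $i$ never took it.

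What is true, and suffices, is the claim restricted to $E_\ell$. Let $s_p$ be the turn of $i$'s $\ell$-th pick from $E_\ell$, and let $I$ be the set of her picks before that turn. Exactly $\ell-1$ of them lie in $E_\ell$, so $|E_\ell\setminus I|=s_\ell-\ell\le s_p-p$. Since $E_\ell$ is a prefix of $\rho$, your invariant places $E_\ell\setminus I$ among the first $s_p-p$ items of $\rho$ outside $I$. All of those have already been taken by the others, giving the contradiction.

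Likewise, in your last paragraph the bound must be on the total number of items of $\rho_1\cdots\rho_{s-1}$ that $i$ ever receives, not on those received before turn $s$, which is trivial. If she spends early turns elsewhere, she can still pick items of $E_\ell$ at or after turn $s_\ell$, and the argument above is what rules out her collecting $\ell$ of them in total.
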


The proof of Proposition~\ref{prop: no manip} follows exactly the proof of
Theorem~2 in \cite{Aziz_et_al2017}. Our statement differs in that agent
$i$ reports truthfully rather than playing the bluff ranking.

We now show that $\mathrm{IRR}$ is strategyproof with cautious agents.
Thus, no manipulation occurs from the truthful profile, so $\mathrm{IRR}$
terminates immediately and cannot cycle. The key step is to show that the
bluff world associated with any possible world is itself in the agent's
information set. The proof can be found in Appendix~\ref{appendix: missing proofs section 4}.

\begin{restatable}{lemma}{bluffis} \label{lemma: bluff IS}
    Consider an instance $\langle \agents,\items,\succ,v\rangle$, a
    round-robin sequence $\pi$, an agent $i\in\agents$, and a world
    $\mathbbm{c}$. Let $A=\mathrm{RR}(\succ_i,\mathbbm{c})$.
    Then
    $\mathrm{bw}(\mathbbm{c})\in\mathcal{I}_i(\succ_i,A)$.
\end{restatable}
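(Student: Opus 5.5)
We need to show that running RR on $(\succ_i, \mathrm{bw}(\mathbbm{c}))$ returns exactly $A=\mathrm{RR}(\succ_i,\mathbbm{c})$. By the definition of the information set, this is all that is required. The plan is to compare the two executions turn by turn and show by induction on the turn $s$ that the same item $\rho_s$ is selected at turn $s$ in both. Here $\rho=(\rho_1,\ldots,\rho_m)$ is the selection order under $\mathrm{RR}(\succ_i,\mathbbm{c})$. Since $\pi$ is fixed, identical items at every turn give identical allocations.

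For the inductive step, suppose turns $1,\ldots,s-1$ agree. Then the set of available items at turn $s$ is $\{\rho_s,\ldots,\rho_m\}$ in both executions, and we split according to who picks at turn $s$.

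\textbf{Case $\pi_s\neq i$.} The acting agent $j$ reports the ranking $\rho_1\cdots\rho_m$ in $\mathrm{bw}(\mathbbm{c})$. Its top available item is $\rho_s$, which matches the original execution.

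\textbf{Case $\pi_s=i$.} Agent $i$ reports $\succ_i$ in both executions and faces the same available set. She therefore selects the same item, namely $\rho_s$.

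This closes the induction and gives $\mathrm{RR}(\succ_i,\mathrm{bw}(\mathbbm{c}))=A$, hence $\mathrm{bw}(\mathbbm{c})\in\mathcal{I}_i(\succ_i,A)$.

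I do not expect a real obstacle. The only point needing care is the case $\pi_s=i$: one must check that agent $i$'s bid in $\mathrm{bw}(\mathbbm{c})$ is exactly $\succ_i$, the bid used to generate $\rho$. This holds because the bluff world only replaces the other agents' rankings. If agent $i$'s bid were different, the argument would fail, which is precisely why the lemma is stated with $\succ_i$ rather than an arbitrary bid. The argument does not actually use the round-robin structure, so it holds for any picking sequence.
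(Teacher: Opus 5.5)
Your proof is correct and matches the paper's argument essentially step for step. The paper also inducts over turns to show that $(\succ_i,\mathrm{bw}(\mathbbm{c}))$ reproduces the selection order $\rho$ of $(\succ_i,\mathbbm{c})$, using the same case split on whether agent $i$ or some $j\neq i$ picks. Your remark that the lemma does not use the round-robin structure is also right; that structure matters only later, in the strategyproofness theorem, where the picker at turn $\tau+1$ must be an agent other than $i$.
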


\begin{theorem}
$\mathrm{IRR}$ is strategyproof with cautious agents.
\end{theorem}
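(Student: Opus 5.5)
Strategyproofness of $\mathrm{IRR}$ here means that from the truthful profile $\bid^0=\succ$, no cautious agent has an admissible manipulation, i.e.\ $\mathrm{CM}^0_i=\emptyset$ for every $i\in\agents$. The dynamics then never leave $\bid^0$. The plan is a short contradiction argument that combines Lemma~\ref{lemma: bluff IS} with Proposition~\ref{prop: no manip}.

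Fix an agent $i$ and let $A^0=\mathrm{RR}(\succ)$. Suppose some $\indbid\in\mathrm{CM}^0_i$. By the second clause of the definition there is a world $\mathbbm{c}\in\mathcal{I}_i(\succ_i,A^0)$ with
\[
v_i\bigl(\mathrm{RR}(\indbid,\mathbbm{c})(i)\bigr)>v_i(A^0(i)).
\]
Since $\mathbbm{c}$ lies in the information set, $\mathrm{RR}(\succ_i,\mathbbm{c})=A^0$. Apply Lemma~\ref{lemma: bluff IS} to this $\mathbbm{c}$ with $A=\mathrm{RR}(\succ_i,\mathbbm{c})=A^0$: this gives $\mathrm{bw}(\mathbbm{c})\in\mathcal{I}_i(\succ_i,A^0)$. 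The first clause of the definition of $\mathrm{CM}^0_i$ then yields
\[
v_i\bigl(\mathrm{RR}(\indbid,\mathrm{bw}(\mathbbm{c}))(i)\bigr)\ge v_i(A^0(i)).
\]

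This bound alone gives no contradiction, so we sharpen it. Against $\mathrm{bw}(\mathbbm{c})$ the truthful report yields bundle $A^0(i)$, because $\mathrm{bw}(\mathbbm{c})$ is in the information set. Proposition~\ref{prop: no manip} then gives $v_i(\mathrm{RR}(\indbid,\mathrm{bw}(\mathbbm{c}))(i))\le v_i(A^0(i))$, so equality holds in the bluff world. The contradiction must therefore come from the strictly improving world, and this is the main obstacle.

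The most direct attempt is to choose the right world. A cautious agent picks $\indbid$ maximizing worst-case utility over $\mathcal{I}_i$, with worst case at least $v_i(A^0(i))$. Every element of $\mathcal{I}_i$ gives truthful utility exactly $v_i(A^0(i))$. So it suffices to show that for every $\indbid$ some world in the information set yields strictly less than $v_i(A^0(i))$, or that every world yields at most $v_i(A^0(i))$; the latter contradicts the second clause. I would prove the latter by strengthening the bluff-world argument: against any $\mathbbm{c}$ with $\mathrm{RR}(\succ_i,\mathbbm{c})=A^0$, a deviation $\indbid$ that strictly improves $i$ somewhere must, by a first-divergence argument, make $i$ take some item earlier than in the truthful run. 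I would then construct a world $\mathbbm{c}'\in\mathcal{I}_i(\succ_i,A^0)$ in which the other agents rank the items of $A^0(i)$ that $i$ delayed just above their own picks. Then $i$ loses one of her items and obtains strictly less, violating the first clause.

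Concretely, I would modify $\mathrm{bw}(\mathbbm{c})$ by moving, for each other agent, the items of $A^0(i)$ to right after that agent's own earlier picks, consistently with the picking order. I would check via the round-robin structure that the allocation under $\succ_i$ is unchanged, which mirrors the proof of Lemma~\ref{lemma: bluff IS}. I expect this consistency check and the strict-loss argument to be the delicate part.
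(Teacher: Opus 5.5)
There is a genuine gap, and it is exactly where you leave the bluff world. You correctly get $\mathrm{bw}(\mathbbm{c})\in\mathcal{I}_i(\succ_i,A^0)$ and, from Proposition~\ref{prop: no manip}, $v_i(\mathrm{RR}(\indbid,\mathrm{bw}(\mathbbm{c}))(i))\le v_i(A^0(i))$. But equality there is impossible, and that is where the paper finds its contradiction. Two facts are needed.

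\emph{The deviation changes $i$'s bundle against $\mathrm{bw}(\mathbbm{c})$.} Let $\tau$ be the first turn at which the runs under $(\succ_i,\mathbbm{c})$ and $(\indbid,\mathbbm{c})$ differ. Then $\pi_\tau=i$, and she takes some $g'$ instead of $g=\rho_\tau$. Against $\mathrm{bw}(\mathbbm{c})$ the same items $\rho_1,\dots,\rho_{\tau-1}$ are picked before $\tau$. The agent at turn $\tau+1$ is some $j\neq i$ whose reported ranking has $g$ first among the remaining items. So $g$ is lost, and $\mathrm{RR}(\indbid,\mathrm{bw}(\mathbbm{c}))(i)\neq A^0(i)$.

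\emph{Against opponents who all report the same ranking, the truthful bundle is the unique optimal bundle, so every other bundle is strictly worse.} The paper gets this by merging the opponents into one agent and invoking the two-agent uniqueness result of \cite{Bouveret_Lang2014}. You can also see it directly with a prefix-counting argument on $\rho$. Every opponent turn up to time $p$ removes an item of $\{\rho_1,\dots,\rho_p\}$, so $i$ gets at most as many items of each prefix as she has turns in it. Everything $i$ prefers to her $(\ell+1)$-th truthful pick lies in the prefix preceding her $(\ell+1)$-th turn. Hence the truthful bundle stochastically dominates every achievable bundle of the same size.

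Together these give $v_i(\mathrm{RR}(\indbid,\mathrm{bw}(\mathbbm{c}))(i))<v_i(A^0(i))$, which violates the first clause of $\mathrm{CM}^0_i$.

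Your replacement plan does not close the gap, for three reasons.
\begin{itemize}
\item The world $\mathbbm{c}'$ is not pinned down. The information-set constraint forbids any opponent from ranking a still-available item of $A^0(i)$ above her own pick. So the most aggressive admissible world is essentially $\mathrm{bw}(\mathbbm{c})$ again, which already makes the next picker grab the delayed item.
\item The inference ``$i$ loses one of her items and obtains strictly less'' does not follow. In the strictly improving world $\mathbbm{c}$ itself, $i$ ends with a different bundle of the same size, so she also loses an item of $A^0(i)$ there, yet she gains strictly. What turns a lost item into a strict loss is precisely the unique-optimality property above. Once you have it, the bluff world suffices and no new world is needed.
\item The alternative you say you would prove is false: not every world of $\mathcal{I}_i(\succ_i,A^0)$ yields at most $v_i(A^0(i))$. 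In the paper's example with $\succ=(abc,bca)$, the manipulation $bac$ yields $5>4$ in the world $bca$.
\end{itemize}
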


\begin{proof}
To show that $\mathrm{IRR}$ is strategyproof, it suffices to show that
$CM_i^0=\emptyset$ for every agent $i\in\agents$. Let
$\langle \agents,\items,\succ,v\rangle$ be an instance and $\pi$ a
round-robin sequence. Let $\bid^0=\:\succ$ and
$A^0=\mathrm{RR}(\bid^0)$. Consider any agent $i\in\agents$ and any
possible manipulation $r\in\bids$.

If $r$ is not strictly profitable in any world of
$\mathcal{I}_i(\succ_i,A^0)$, then it is not a cautious manipulation.
Suppose therefore that there exists
$\mathbbm{c}\in\mathcal{I}_i(\succ_i,A^0)$ such that $v_i\!\left(\mathrm{RR}(r,\mathbbm{c})(i)\right)
>
v_i(A^0(i))$.
Let $\rho^{\mathbbm{c}}=(\rho^{\mathbbm{c}}_1,\ldots,
\rho^{\mathbbm{c}}_m)$ be the order in which items are selected under
$(\succ_i,\mathbbm{c})$. Since
$\mathbbm{c}\in\mathcal{I}_i(\succ_i,A^0)$, this execution returns
$A^0$. Moreover, by Lemma~\ref{lemma: bluff IS},
$\mathrm{bw}(\mathbbm{c})
\in
\mathcal{I}_i(\succ_i,A^0)$.

Consider the execution under $(\succ_i,\mathbbm{c})$ and the manipulated
execution under $(r,\mathbbm{c})$. Since the manipulation is strictly
profitable, the two allocations differ for agent $i$. Let $\tau$ be the
first turn at which the selected items differ. Since only agent $i$
changes her bid, $\pi_\tau=i$. Let $ g=\rho^{\mathbbm{c}}_\tau$ 
be the item selected by agent $i$ under $(\succ_i,\mathbbm{c})$, and let
$g'$ be the item she selects instead under $(r,\mathbbm{c})$.

Now consider the manipulation $r$ in $\mathrm{bw}(\mathbbm{c})$. We
first show that the same items
$\rho^{\mathbbm{c}}_1,\ldots,\rho^{\mathbbm{c}}_{\tau-1}$ are selected
before turn $\tau$. Indeed, suppose that the same items have been selected
up to some turn $s<\tau$. If agent $i$ acts at turn $s+1$, she faces the
same set of available items and reports $r$ in both executions, and
therefore makes the same choice. If some agent $j\neq i$ acts, then, since
$s+1<\tau$, the item selected under $(r,\mathbbm{c})$ is
$\rho^{\mathbbm{c}}_{s+1}$. In $\mathrm{bw}(\mathbbm{c})$, this item is
the highest-ranked available item and is therefore selected as well.
Hence, by induction, the two executions coincide up to turn $\tau-1$.

Thus, at turn $\tau$, agent $i$ faces the same set of available items and
selects $g'$ instead of $g$. Since the choices differ at turn $\tau$, we
have $\tau<m$, as only one item remains at the last turn. Moreover, as
$\pi$ is round robin and $n\geq2$, the agent acting at turn $\tau+1$ is
some $j\neq i$. In $\mathrm{bw}(\mathbbm{c})$, agent $j$ reports the
ranking $\rho^{\mathbbm{c}}_1\cdots\rho^{\mathbbm{c}}_m$.
Since $\rho^{\mathbbm{c}}_1,\ldots,\rho^{\mathbbm{c}}_{\tau-1}$ have already
been selected while $g=\rho^{\mathbbm{c}}_\tau$ remains available, $g$
is her highest-ranked available item. Hence, agent $j$ selects $g$, and
agent $i$ cannot recover it. Therefore, $\mathrm{RR}(r,\mathrm{bw}(\mathbbm{c}))(i)
\neq
A^0(i)$.

Moreover, since all agents in $\agents \setminus \{i\}$ report the bluff ranking, they can be aggregated in one, representative agent sharing the same preferences, and playing in place of those agents. Then, we obtain a two-player instance that is equivalent with our initial one. From \cite{Bouveret_Lang2014}, we know that in any two-player instance, there is a unique optimal bundle. By Proposition~\ref{prop: no manip}, truthful reporting is optimal for agent $i$ against $\mathrm{bw}(\mathbbm{c})$. Hence, by the uniqueness of the optimal bundle in the corresponding two-agent instance
\cite{Bouveret_Lang2014}, any different bundle yields strictly lower
utility. Hence, $v_i\!\left( \mathrm{RR}(r,\mathrm{bw}(\mathbbm{c}))(i)
\right) < v_i(A^0(i)). $ Thus, any manipulation that is strictly profitable in some possible world
is strictly detrimental in another world of the same information set. It
therefore cannot be cautious. Hence, $CM_i^0=\emptyset$
 for every $i\in\agents$. Therefore, no agent manipulates from the truthful profile and $\mathrm{IRR}$ is strategyproof with cautious agents.
\end{proof}
\section{Conclusion}

In this paper, we studied the convergence of the Iterated Round Robin (IRR) under different settings. In Section~\ref{section: full info}, we studied a full-information setting, and proposed different restrictions on manipulations after exhibiting a minimal cycle ($n=2, m=4$) in the unrestricted setting. We showed that for some of them (i.e. single-item change, and swap), IRR may not converge, but nevertheless identified a natural restriction (namely, item-preserving swap) under which IRR is guaranteed to converge for any number of agents, and any number of items. Moreover, we showed that for $n=2$, IRR converges with $m=4$, while leaving the question open for an arbitrary number of items; we also show that IRR may cycle for $n=3$ under the same restriction. We emphasize that the examples of cycle given throughout this section hold for any utility function consistent (in the sense of the responsive set extension) with the true ordinal preferences (we gave specific utility models only for the sake of clarity). In Section~\ref{section: partial info}, we studied a partial-information setting, where agents would infer an information set from the allocation computed by the Round Robin. We showed that two natural agents' behaviours (i.e. optimistic and expected-utility maximizer) could cycle. Here again, we stress that the counterexample given for optimistic agents holds for any utility model consistent with the true ordinal preferences, hence only the expected-utility case is specific to the choice of utilities. Finally, we showed that IRR was strategyproof when agents were cautious. In fact, this result is much more general: any picking sequence is strategyproof when agents are cautious.

This paper yields future interesting directions. For instance, we may wonder if other restrictions on manipulations may lead to convergence, and in particular, if there exists such restriction that preserve the EF1 w.r.t. the true preferences property. We may be also tempted to study additional behaviours in the partial-information setting. In particular, it could be interesting to find a less conservative behaviour than cautious that would converge to a PNE. It could also be of interest to drop the memory-less assumption of the agents to see if this changes the convergence dynamics.  

\bibliographystyle{plain}
\bibliography{./sample}

\newpage
\appendix
\section{Appendix}

\subsection{PNE is not EF1 w.r.t. the true preferences under restricted manipulations}  \label{appendix: not EF1}

In Section~\ref{section: full info}, we showed that a PNE under item-preserving-swap was not necessarily EF1 w.r.t. the true preferences. We provide counterexamples for the other restrictions: single-item change, swap, adjacent-swap. 

\begin{proposition} \label{proposition: single-item change not EF1}
    A PNE under single-item change may induce a non-EF1 allocation under $\mathrm{RR}$ with respect to the true preferences, and $\mathrm{IRR}$ may converge to such an equilibrium.
\end{proposition}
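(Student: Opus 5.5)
The statement is existential, so the plan is to exhibit an explicit counterexample. I will mirror the proof of Proposition~\ref{proposition: item-preserving swap not EF1}. I will write down a small instance, give a concrete execution of $\mathrm{IRR}$ restricted to single-item changes starting from $\bid^0=\:\succ$, and check two things: the final bid vector is a PNE under single-item changes, and its allocation violates EF1 with respect to $v$.

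For the instance I would start with $n=3$ agents and $m=6$ items, so that each agent receives two items. A non-EF1 outcome for two items per agent means some agent $k$ values her bundle strictly less than each single item of another agent's bundle. To reach this, I want agent $k$ to be pushed into taking low-valued items late in the sequence. The natural way is to let agents $1$ and $2$ make single-item changes that promote items $k$ likes. Each such move should be a strictly profitable best response among single-item changes for the mover. Meanwhile, agent $k$'s own best single-item change should fail to repair her situation.

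A good starting point is to reuse the instance of Proposition~\ref{proposition: item-preserving swap not EF1}, namely $\succ=(abecfd, feadcb, bcfead)$ with the same utilities. The first two steps there are single moves. The step $abecfd\to baecfd$ is an adjacent swap, hence also a single-item change. The step $bcfead\to befcad$ moves $c$ down two positions, which is also a single-item change. Adjacent swaps are single-item changes in general, so I would check whether each step is still a best response within the larger single-item-change neighbourhood. I would also look for a single-item change for agent $2$ that leads to the allocation $\{b,c\},\{a,f\},\{e,d\}$, or to a similar allocation where $v_3(A(3))$ is below $v_3(b)$ and $v_3(c)$.

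The final checks are routine. First, for each agent, enumerate the $O(m^2)$ single-item changes of her final bid, run $\mathrm{RR}$, and confirm that none strictly improves her utility. Second, verify the EF1 failure numerically.

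The main obstacle is the PNE verification. Single-item changes form a much richer neighbourhood than item-preserving swaps, so the final profile from the earlier example may not remain an equilibrium. In particular, agent $3$, who is badly off, may now find a profitable move. If that happens, I would adjust the utilities or rankings to make agent $3$'s competitors' top items coincide with her valued items. This way she cannot secure $b$ or $c$ by any single-item change, because agents $1$ and $2$ pick them earlier in the round. I would then recheck the execution, falling back to a small computer-assisted search over instances with $n=3$ and $m=6$ if hand tuning fails.
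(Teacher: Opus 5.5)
\textbf{Gap: no witness, and the proposed search space cannot contain one.} The statement is existential, so the proof is the witness, and your proposal never supplies one. Its only concrete candidate fails for two reasons.
\begin{itemize}
\item The step $bcfead \to befcad$ is not a single-item change. Moving $c$ down two positions gives $bfecad$, whereas $befcad$ reverses the three pairs $\{c,e\},\{c,f\},\{e,f\}$, which no single moved item can do. Likewise $feadcb \to aefdcb$ is not a single-item change.
\item More importantly, the target profile $(baecfd, aefdcb, befcad)$, with bundles $\{b,c\},\{a,f\},\{e,d\}$, is not a PNE under single-item changes. Agent~3 can move $f$ above $e$ (bid $bfecad$), take $f$ at turn~3 and $c$ at turn~6, and get $9>4$.
\end{itemize}

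Neither hand-tuning nor a computer search over $n=3$, $m=6$ can repair this, because when $m\leq 2n$ every PNE under single-item changes is EF1 with respect to the true preferences. The argument is as follows.
\begin{itemize}
\item Each agent holds at most two items.
\item For $i\neq j$ with $|A(j)|=2$, the second item $h$ of agent $j$ is picked at turn $n+j>i$. Agent $i$'s deviation does not affect turns before $i$, so $h$ is still available at agent $i$'s first turn.
\item Moving $h$ to the top of agent $i$'s bid is a single-item change ($h$ is not already on top, or $i$ would have taken it). It gives $i$ a bundle containing $h$.
\item Utilities are nonnegative, so at an equilibrium $v_i(A(i))\geq v_i(h)\geq\min_{g\in A(j)}v_i(g)$. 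This is exactly EF1 for two-item bundles.
\end{itemize}

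\textbf{What is missing.} Some agent must hold at least three items, and the envious agent's improving deviations must lie out of reach of one moved item, for instance because they require reversing two disjoint pairs of her bid. The paper's witness does exactly this with $n=3$, $m=7$ and $\succ=(abdegfc, efdgcab, abefgcd)$.
\begin{itemize}
\item Six single-item best responses lead to $(abcgdef, efdgcab, aefbdgc)$, with bundles $\{a,b,c\},\{e,d\},\{f,g\}$.
\item Agents~1 and~2 are at their unrestricted optimum.
\item Agent~3 could only gain by taking $b$ at turn~3 and $g$ at turn~6. That needs both $b$ above $f$ and $g$ above $d$, which one single-item change cannot achieve.
\item Finally, $v_3(A(3))=7<8=v_3(A(1)\setminus\{a\})$, and removing $b$ or $c$ instead leaves $9$ or $13$, so the allocation is not EF1.
\end{itemize}
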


\begin{proof}
    Consider an instance $\langle \agents, \items, \succ, v \rangle$ such that $\agents = \{1, 2, 3\}$, $\items = \{a, b, c, d, e, f, g\}$, and $\succ = (abdegfc, efdgcab, abefgcd)$. 
    Assume the utilities are as follows:
    \begin{center}
    \begin{tabular}{c|ccccccc}
        $v_1$ & $a:7$ & $b:6$ & $c:1$ & $d:5$ & $e:4$ & $f:2$ & $g:3$ \\ \hline
        $v_2$ & $a:2$ & $b:1$ & $c:3$ & $d:5$ & $e:7$ & $f:6$ & $g:4$ \\ \hline
        $v_2$ & $a:7$ & $b:6$ & $c:2$ & $d:1$ & $e:5$ & $f:4$ & $g:3$
    \end{tabular}
    \end{center}
    Let $\pi$ be a round-robin sequence. We restrict manipulations to single-item change. We show that this instance may converge to a PNE that is not EF1 w.r.t. $\succ$.
    
    $$                                     
    \begin{aligned}
        \{abdegfc, &efdgcab, abefgcd\}[adc: 13, ef: 13, bg: 9]  \\                                                     
        &\overset{1}{\rightarrow} \{agbdefc, efdgcab, abefgcd\}[agd: 15, ef: 13, bc: 8] \\
        &\overset{3}{\rightarrow} \{agbdefc, efdgcab, aefgbcd\}[agc: 11, ed: 12, fb: 10] \\                                    &\overset{1}{\rightarrow} \{acgbdef, efdgcab, aefgbcd\}[acb: 14, ed: 12, fg: 7] \\                             
        &\overset{3}{\rightarrow} \{acgbdef, efdgcab, aefbcdg\}[acg: 11, ed: 12, fb: 10] \\
        &\overset{1}{\rightarrow} \{abcgdef, efdgcab, aefbcdg\}[abg: 16, ed: 12, fc: 6] \\                                     &\overset{3}{\rightarrow} \{abcgdef, efdgcab, aefbdgc\}[abc: 14, ed: 12, fg: 7] \\                             
    \end{aligned}
    $$
    
    The final equilibrium yields $v_3(A(3)) = 7 < v_3(A(1) \setminus \{a\}) = 8$. Hence, the equilibrium is not EF1 w.r.t. the true preferences.
\end{proof}                                                                                                          
\begin{proposition} \label{proposition: swap not EF1}
    A PNE under swaps may induce a non-EF1 allocation under $\mathrm{RR}$ with respect to the true preferences, and $\mathrm{IRR}$ may converge to such an equilibrium.
\end{proposition}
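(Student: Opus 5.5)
The statement is existential, so I will prove it by exhibiting an explicit instance together with an execution of $\mathrm{IRR}$, as in Propositions~\ref{proposition: item-preserving swap not EF1} and~\ref{proposition: single-item change not EF1}. I will use $n=3$ agents, since with two agents under round robin the envy structure is too constrained: the last picker tends to retain EF1 against a single opponent. The target final allocation has agent $3$ holding two items it values poorly, while agent $1$'s bundle stays envied by agent $3$ even after removing its best item.

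To construct the instance, I would reuse the previous single-item change example as a template. There, agents $1$ and $3$ share their top items ($a$, $b$), and agent $1$ reorders its bid so that it first secures $a$ and then grabs $b$ before agent $3$ can. I would look for the same pattern reachable by swaps. Agent $1$ has true ranking $ab\ldots$, and agent $3$ has true ranking $ab\ldots$ with high values on $a$ and $b$. Agent $3$ is forced toward a bid whose third-round picks are low-valued items. Agent $1$ then swaps a mid-ranked item with $b$ or with the item taken in its second round, until it keeps both $a$ and $b$.

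Next I would write the execution step by step, giving the bid profile, bundles and utilities at each stage. For every step I must check that the chosen deviation is a best response among all single swaps of the current bid, which is at most $\binom{m}{2}$ candidates. With $m=6$ or $7$ this is a finite check I would carry out by tabulating each agent's outcome under every swap.

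The main obstacle is the final PNE verification. Every one of the $\binom{m}{2}$ swaps for each of the three agents must be shown not to strictly improve utility. For agents $1$ and $2$ this should follow quickly because they already obtain their top achievable items. For agent $3$, I would argue that any swap moving a high-value item up is pre-empted, because that item is picked by agent $1$ or $2$ before agent $3$'s turn. Any other swap only changes which low-value item agent $3$ receives. Finally, I would compute $v_3(A(3)) < v_3(A(1)\setminus\{g\})$ for every $g\in A(1)$ to conclude that the allocation is not EF1.
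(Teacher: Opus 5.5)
Your overall strategy is the same as the paper's: an explicit three-agent instance with an $\mathrm{IRR}$ run under swaps. But the proposal never actually gives the instance, and for an existential claim of this kind the witness and its verification are the entire proof. The paper supplies seven items, true rankings $(bagcfed, bfeagcd, bafcedg)$ with explicit utilities, and a ten-step swap dynamic from the truthful profile. The run ends at $(bfadgec, bfeagcd, egfacbd)$ with bundles $\{b,a,d\}$, $\{f,g\}$, $\{e,c\}$, where $v_3(\{e,c\})=7<8=v_3(\{a,d\})$. Describing how you would search for such an example (reusing the single-item-change template, tabulating all swaps) proves nothing until a concrete profile is produced and each best response and the final equilibrium are checked.

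Your planned equilibrium check also could not work as stated. By \cite{AmanatidisBFLLR24}, every unrestricted PNE of $\mathrm{RR}$ is EF1 with respect to the true preferences. So your final profile must \emph{fail} to be an unrestricted PNE. Since in your plan agents~1 and~2 already do as well as they can, agent~3 must have a profitable deviation that is not a single swap.

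Pre-emption cannot be what rules out her swaps. Suppose a deviation first changes her pick at some turn, from $y$ to $h$. Then the single swap exchanging $y$ and $h$ in her bid already gives her $h$ at that turn, with all earlier picks unchanged. What you must show instead is that every such swap costs her something at a later pick.

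This is exactly what happens in the paper's final profile. Agent~3 could obtain $\{a,c\}$, worth $10$, by reporting a ranking that starts with $a$ then $c$. However, the only swap placing $a$ above $e$ and $g$ exchanges $e$ and $a$, giving $agfecbd$. That bid still ranks $g$ above $c$, so she ends with $\{a,g\}$, worth $7$, and gains nothing.
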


\begin{proof}                      
    Consider an instance $\langle \agents, \items, \succ, v \rangle$ such that $\agents = \{1, 2, 3\}$, $\items = \{a, b, c, d, e, f, g\}$, and $\succ = (bagcfed, bfeagcd, bafcedg)$. 
    Assume the utilities are as follows:
    \begin{center}
    \begin{tabular}{c|ccccccc}
        $v_1$ & $a:6$ & $b:7$ & $c:4$ & $d:1$ & $e:2$ & $f:3$ & $g:5$ \\ \hline
        $v_2$ & $a:4$ & $b:7$ & $c:2$ & $d:1$ & $e:5$ & $f:6$ & $g:3$ \\ \hline
        $v_2$ & $a:6$ & $b:7$ & $c:4$ & $d:2$ & $e:3$ & $f:5$ & $g:1$
    \end{tabular}
    \end{center}
    
    Let $\pi$ be a round-robin sequence. We restrict manipulations to  swap. We show that this instance may converge to a PNE that is not EF1 w.r.t. $\succ$.               
    $$
    \begin{aligned}
        \{bagcfed, &bfeagcd, bafcedg\}[bgd: 13, fe: 11, ac: 10]  \\
        &\overset{1}{\rightarrow} \{bcgafed, bfeagcd, bafcedg\}[bcg: 16, fe: 11, ad: 8] \\
        &\overset{3}{\rightarrow} \{bcgafed, bfeagcd, bcfaedg\}[bgd: 13, fe: 11, ca: 10] \\
        &\overset{1}{\rightarrow} \{bcagfed, bfeagcd, bcfaedg\}[bag: 18, fe: 11, cd: 6] \\
        &\overset{3}{\rightarrow} \{bcagfed, bfeagcd, dcfaebg\}[bcg: 16, fe: 11, da: 8] \\
        &\overset{1}{\rightarrow} \{bdagfec, bfeagcd, dcfaebg\}[bag: 18, fe: 11, dc: 6] \\
        &\overset{3}{\rightarrow} \{bdagfec, bfeagcd, cdfaebg\}[bdg: 13, fe: 11, ca: 10] \\
        &\overset{1}{\rightarrow} \{bgadfec, bfeagcd, cdfaebg\}[bga: 18, fe: 11, cd: 6] \\
        &\overset{3}{\rightarrow} \{bgadfec, bfeagcd, cgfaebd\}[bgd: 13, fe: 11, ca: 10] \\
        &\overset{1}{\rightarrow} \{bfadgec, bfeagcd, cgfaebd\}[bad: 14, fe: 11, cg: 5] \\
        &\overset{3}{\rightarrow} \{bfadgec, bfeagcd, egfacbd\}[bad: 14, fg: 9, ec: 7] \\
    \end{aligned}
    $$

    The final equilibrium yields $v_3(A(3)) = 7 < v_3(A(1) \setminus \{b\}) = 8$. Hence, the equilibrium is not EF1 w.r.t. the true preferences.
\end{proof}

\begin{proposition} \label{proposition: adjacent swap not EF1}
    A PNE under adjacent swaps may induce a non-EF1 allocation under $\mathrm{RR}$ with respect to the true preferences, and $\mathrm{IRR}$ may converge to such an equilibrium.
\end{proposition}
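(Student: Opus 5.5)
The proof will be by explicit counterexample, as for Propositions~\ref{proposition: single-item change not EF1} and~\ref{proposition: swap not EF1}. I will construct an instance $\langle \agents, \items, \succ, v\rangle$ and an execution of $\mathrm{IRR}$ under adjacent swaps that starts from $\bid^0 = \succ$ and ends at a bid profile where no agent has a strictly profitable adjacent swap, and I will check that the resulting allocation is not EF1 with respect to $v$.

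\textbf{Size of the instance.} For two agents, RR with a common bluff-like structure tends to preserve EF1, and the adjacent-swap cycle example already needed three agents. I will therefore take $n=3$ and $m=6$ or $7$, mirroring the sizes used for the other restrictions.

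\textbf{Mechanism for failure.} Some agent, say agent~3, performs a locally profitable adjacent swap early in the dynamics. The other agents then react, and afterwards agent~3 is stuck. Her truly valuable items have been taken by agent~1, who picks first, and no single adjacent swap can recover them. This is because every adjacent swap only changes her pick at one turn by one rank, and that either leaves her bundle unchanged or hands a better item to the next picker.

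\textbf{Construction.} The concrete plan is to start from a profile in which agents~1 and~3 share their top items (for instance both rank $a,b$ highest), while agent~2 has disjoint tastes. I will then give agent~1 a short sequence of adjacent swaps that lets her secure both $a$ and $b$, using the fact that agent~3's bid lets $b$ survive until agent~1's second turn. The utilities should make agent~3's final bundle worth less than $v_3(A(1)\setminus\{g\})$ for every $g\in A(1)$; with $v_3(a)$ and $v_3(b)$ much larger than her other values, this holds automatically once $\{a,b\}\subseteq A(1)$.

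\textbf{Verification and main obstacle.} Each step of the execution must be a best response among adjacent swaps, which means checking at most $m-1$ alternatives per agent per step. The final profile must be a PNE under adjacent swaps, which means checking all $3(m-1)$ adjacent swaps and confirming that none strictly improves its author. The main obstacle is the equilibrium check for agent~3. A swap that moves $b$ upward in her bid must not let her grab $b$ before agent~1's second pick. I will handle this by arranging that, in agent~3's final bid, $b$ sits at least two positions below the item she takes first, so that no single adjacent swap brings it within reach. I will search small instances, by hand or with a quick exhaustive enumeration, for one meeting these constraints, and then present the execution trace in the same $\{\cdot\}[\cdot]$ format as the preceding proofs, ending with the explicit non-EF1 inequality.
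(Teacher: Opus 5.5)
You take the same route as the paper, an explicit counterexample with a verified $\mathrm{IRR}$ trace. However, your proposal stops at a search plan: it gives no preference profile, no utilities, no execution, and no check that the last profile is an equilibrium under adjacent swaps. For an existential statement of this kind, the instance \emph{is} the proof, so as written nothing is established.

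Nor is this a formality. For adjacent swaps the paper needs $n=3$, $m=10$ and eleven best-response steps, whereas $m=6$ or $7$ sufficed for the other restrictions. In its instance the envious agent is agent~2. She pushes her top item $i$ down one position at a time, through five separately profitable swaps ($iajc\ldots \to aijc\ldots \to \cdots \to ajcdfi\ldots$). Agent~1 then promotes $i$ and takes it at turn~7. The already-taken item $f$ now separates $i$ from agent~2's turn-5 pick $d$, and $v_2(A(2))=18<19=v_2(A(1)\setminus\{i\})$. Nothing in your plan shows that such a trajectory fits in $m\in\{6,7\}$.

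Your design heuristics also have two holes.

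\textbf{Stuckness.} An adjacent swap fixes only the first turn at which it matters; every later pick, hers and the other agents', can change. So the claim that a swap ``either leaves her bundle unchanged or hands a better item to the next picker'' is not a valid argument. Take agent~3 swapping her turn-3 pick $x$ with the item immediately below it, when that item is still available. She then leaves $x$ to agent~1. If agent~1 ranks $x$ above $b$, she takes $x$ at turn~4, and $b$ can reach agent~3 at turn~6. Keeping $b$ two positions below $x$ therefore blocks only the direct route. You also need conditions such as agent~1's final bid ranking $b$ above every item agent~3 can release this way.

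\textbf{Dominant utilities.} With $v_3(a),v_3(b)$ dominant and agent~3 picking only at turns~$3$ and~$6$ (for $m\le 7$), consider any swap by agent~3 that pushes $b$ below an item outside $\{a,b\}$. It is profitable only if it first matters at turn~3 and one of $a,b$ still survives turns~4--5 to reach her at turn~6. That can happen only at a moment when agent~1 is not about to take that item. In the final profile, by contrast, agent~1 holds both $a$ and $b$. Building your buffer thus requires several such demotions, timed while agent~1 ignores $b$. These must be followed by profitable moves of agent~1 towards $b$. Everything must end in a profile where no agent, agent~3 in particular, has a profitable adjacent swap. Orchestrating this timing is the real content of the counterexample, and your proposal delegates it entirely to an unspecified search.
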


\begin{proof}
    Consider an instance $\langle \agents, \items, \succ, v \rangle$ such that $\agents = \{1, 2, 3\}$, $\items = \{a, b, c, d, e, f, g, h, i, j\}$, and the agents' preferences are $\succ = (cadgeibjfh, iajcdfhegb, jfdgbhcaie)$.
    Assume the utilities are as follows:
    \begin{center}
    \resizebox{1.0\linewidth}{!}{
    \begin{tabular}{c|cccccccccc}
        $v_1$ & $a:9$ & $b:4$ & $c:10$ & $d:8$ & $e:6$ & $f:2$ & $g:7$ & $h:1$ & $i:5$ & $j:3$ \\ \hline
        $v_2$ & $a:9$ & $b:1$ & $c:7$ & $d:6$ & $e:3$ & $f:5$ & $g:2$ & $h:4$ & $i:10$ & $j:8$ \\ \hline
        $v_2$ & $a:3$ & $b:6$ & $c:4$ & $d:8$ & $e:1$ & $f:9$ & $g:7$ & $h:5$ & $i:2$ & $j:10$
    \end{tabular}
    }
    \end{center}
    
    Let $\pi$ be a round-robin sequence. We restrict manipulations to adjacent-swap. We show that this instance may converge to a PNE that is not EF1 w.r.t. $\succ$.
    {\small
    $$
    \begin{aligned}
        &\{cadgeibjfh, iajcdfhegb, jfdgbhcaie\}[cage: 32, idh: 20, jfb: 25]  \\
        &\overset{2}{\rightarrow} \{cadgeibjfh, aijcdfhegb, jfdgbhcaie\}[cdge: 31, aih: 23, jfb: 25] \\
        &\overset{1}{\rightarrow} \{acdgeibjfh, aijcdfhegb, jfdgbhcaie\}[acge: 32, idh: 20, jfb: 25] \\
        &\overset{2}{\rightarrow} \{acdgeibjfh, ajicdfhegb, jfdgbhcaie\}[acge: 32, jih: 22, fdb: 23] \\
        &\overset{1}{\rightarrow} \{adcgeibjfh, ajicdfhegb, jfdgbhcaie\}[adce: 33, jih: 22, fgb: 22] \\
        &\overset{2}{\rightarrow} \{adcgeibjfh, ajcidfhegb, jfdgbhcaie\}[adeh: 24, jci: 25, fgb: 22] \\
        &\overset{1}{\rightarrow} \{acdgeibjfh, ajcidfhegb, jfdgbhcaie\}[acge: 32, jih: 22, fdb: 23] \\
        &\overset{2}{\rightarrow} \{acdgeibjfh, ajcdifhegb, jfdgbhcaie\}[aceh: 26, jdi: 24, fgb: 22] \\
        &\overset{1}{\rightarrow} \{cadgeibjfh, ajcdifhegb, jfdgbhcaie\}[cdge: 31, aih: 23, jfb: 25] \\
        &\overset{2}{\rightarrow} \{cadgeibjfh, ajcdfihegb, jfdgbhcaie\}[cdeh: 25, afi: 24, jgb: 23] \\
        &\overset{1}{\rightarrow} \{cadgiebjfh, ajcdfihegb, jfdgbhcaie\}[cdie: 29, afh: 18, jgb: 23] \\
        &\overset{1}{\rightarrow} \{acdgiebjfh, ajcdfihegb, jfdgbhcaie\}[acie: 30, jdh: 18, fgb: 22] \\
    \end{aligned}
    $$
    }
    The final equilibrium yields $v_2(A(2)) = 18 < v_2(A(1) \setminus \{i\}) = 19$. Hence, the equilibrium is not EF1 w.r.t. the true preferences.
\end{proof}

\subsection{Sum of Kendall tau distances is not a suitable potential function} \label{appendix: sum of KT}

In Section~\ref{section: full info}, we proved that IRR converges under safe-swap, for $n=2$ agents, showing the the Kendall tau distance of the players' bids was monotonically decreasing. To prove convergence for arbitrary $n$, a natural path is to try to extend this potential function, e.g., by considering the sum of the Kendall tau distances. Unfortunately, we show that such a potential function is not monotonically decreasing.

\begin{example}
    Consider an instance $\langle \agents, \items, \succ, v \rangle$ such that $\agents = \{1, 2, 3\}$, $\items = \{a, b, c, d, e, f\}$, and $\succ = (abcdef, adbefc, acbdef)$. Let $\bid =\: \succ$ be a truthful bid vector, and $A := \mathrm{RR}(\bid)$ such that $A(1) = \{a, b\}$, $A(2) = \{d, e\}$, and $A(3) = \{c, f\}$. 

    First, let's compute the Kendall tau distance between all pairs of vectors in $\bid$, and denote $KT(\bid_i, \bid_j)$ the Kendall tau distance between rankings $\bid_i$ and $\bid_j$. We have: $KT(\bid_1, \bid_2) = 4$, $KT(\bid_1, \bid_3) = 1$, and $KT(\bid_2, \bid_3) = 5$. 

    Now, assume that agent $3$ manipulates and announces $\bid_3' = aebdcf$, and let $\bid' = (\bid_1, \bid_2, \bid_3')$. Note that this manipulation is an item-preserving swap since it is a swap, and agent $3$ still obtains $c$. Indeed, let $A' := \mathrm{RR}(\bid')$. Then, we have $A'(3) = \{e, c\}$. Now, let's recompute the Kendall tau distance between all pairs of vectors in $\bid'$: $KT(\bid_1, \bid_2) = 4$, $KT(\bid_1, \bid_3) = 4$, and $KT(\bid_2, \bid_3) = 4$.

    Notice that under $\bid$ the sum of the Kendall tau was equal to $10$, while under $\bid'$, it is equal to $12$. Hence, the sum of Kendall tau distances can increase even after an item-preserving swap when $n=3$.
\end{example}

\subsection{Proof convergence under adjacent swap} \label{appendix: adj swap}

\adjswap*

\begin{proof}
Consider an instance $\langle \agents,\items,\succ,v\rangle$ with
$n=2$ and $m=4$. Up to renaming agents and items, let
$\pi=(1,2,1,2)$ and $\succ_1: abcd$. The initial bid profile is truthful, i.e., $\bid^0=\: \succ$.

We establish two claims. First, item $d$ always remains last in agent
$1$'s bid. Second, every profitable manipulation by agent $2$ moves
$d$ one position downwards in her bid.

\medskip

\noindent\textbf{Claim 1.}
Agent $1$ never moves $d$ upwards in her bid.

\begin{proof}
Initially, $d$ is ranked last by agent $1$. Suppose that, at some
iteration, $d$ is still last in $\bid_1$, and let $g$ be the item ranked
immediately above it. Consider the adjacent swap that exchanges $g$ and
$d$. This swap cannot affect agent $1$'s first choice, since $g$ and $d$ occupy the last two positions of her bid. Hence, it can only affect her second choice. If $g$ and $d$ are not both available at that turn, the swap has no effect on the allocation. Otherwise, agent $1$ selects $g$ before the
swap and $d$ after the swap. Since $g\succ_1 d$, the swap is not profitable.
Thus, agent $1$ never moves $d$ upwards. Since $d$ is initially ranked
last, it remains last in her bid throughout the execution.
\end{proof}

\medskip
\noindent\textbf{Claim 2.}
Every profitable manipulation by agent $2$ moves $d$ one position
downwards in her bid.

\begin{proof}
We argue by induction on the number of manipulations performed by agent
$2$. Initially, her bid is truthful. Assume that all her previous
manipulations have moved $d$ downwards. Hence, her current bid is obtained
from $\succ_2$ by moving $d$ downwards, while preserving the relative
order of all other items.

By Claim~1, $d$ is always ranked last by agent $1$. Therefore, agent $2$
is guaranteed to receive $d$: if she does not select it at turn $2$, then
at turn $3$ agent $1$ selects the other remaining item, leaving $d$ to
agent $2$ at turn $4$. Consider now an adjacent swap by agent $2$. If this swap does not change her choice at turn $2$, then the same items remain available at turn $3$, so agent $1$ makes the same choice, and consequently the same item remains for agent $2$ at turn $4$. Hence, the swap can affect agent $2$'s allocation only if it changes her choice at turn $2$.

First, suppose that the swap exchanges two items $g$ and $g'$ different
from $d$, with $g$ ranked immediately above $g'$ in the current bid.
Since previous manipulations by agent $2$ only moved $d$ downwards, the
relative order of $g$ and $g'$ is the same as in her true preferences;
hence $g\succ_2 g'$. If the swap changes agent $2$'s choice at turn $2$, then she selects $g$ before the swap and $g'$ after the swap. Since she receives $d$ in both cases, her bundle changes from $\{g,d\}$ to $\{g',d\}$. As $g\succ_2 g'$, this manipulation is strictly worse and therefore cannot
be profitable.

Second, suppose that the swap moves $d$ upwards, exchanging it with the
item $g$ immediately above it. If the swap has an effect, agent $2$
selects $g$ at turn $2$ before the swap and $d$ after the swap. Before
the swap, her final bundle is $\{g,d\}$. After the swap, her final bundle
is $\{d,g'\}$ for some item $g'$. Since $g$ was the highest-ranked
available item different from $d$, and the relative order of all such
items agrees with $\succ_2$, either $g'=g$ or $g\succ_2 g'$.
Hence, moving $d$ upwards cannot be profitable.

Therefore, any profitable adjacent swap by agent $2$ must move $d$ one
position downwards.
\end{proof}

By Claim~2, each manipulation by agent $2$ moves $d$ one position
downwards. Hence, she can manipulate at most three times. Suppose, for contradiction, that an execution is infinite. Since agent $2$ manipulates only finitely many times, there is some
iteration after which only agent $1$ manipulates. From that point on,
agent $2$'s bid is fixed, while every manipulation by agent $1$ strictly
increases her utility. Since there are only finitely many possible
allocations, agent $1$ cannot manipulate infinitely often, a contradiction. 

Therefore, every execution terminates. At termination, no agent has a
profitable adjacent swap, and hence the resulting bid profile is a pure
Nash equilibrium under adjacent swaps.
\end{proof}

\subsection{Missing proofs from Section~\ref{section: partial info}} \label{appendix: missing proofs section 4}

\cycleoptimistic*

\begin{proof}
Consider an instance $\langle \agents, \items, \succ, v \rangle$ with $\agents = \{1,2\}$, $\items = \{a,b,c\}$, and $\succ = (bac,abc)$. Assume the utilities are as follows:
    \begin{center}
    \begin{tabular}{c|ccc}
        $v_1$ & $a:2$ & $b:3$ & $c:1$ \\ \hline
        $v_2$ & $a:3$ & $b:2$ & $c:1$
    \end{tabular}
    \end{center}
    
    The following execution of $\mathrm{IRR}$ enters a cycle:    
    $$
    \begin{aligned}
    \{bac, abc\}[bc: 4, a: 3] 
        &\overset{1}{\rightarrow} \{abc, abc\}[ac: 3, b: 2] \\
        &\overset{1}{\rightarrow} \{bac, abc\}[bc: 4, a: 3]
    \end{aligned}
    $$

    We detail the reasoning of agent~1 at each step. 

    \paragraph{First step.} Agent~1 announces $bac$ and receives $\{b, c\}$, while agent~2 receives $a$. Agent~1 thus infers that agent~2 prefers $a$ to $c$, i.e., the set of possible worlds is $\mathrm{IS}_1(A^0) = \{abc, acb, bac\}$.
    The set of possible manipulations is $\{abc, acb, bca, cab, cba\}$. 
    
    If $1$ announces $bca$, it yields $4$ of utility in each possible world. If $1$ announces $cab$ or $cba$, it yields $4$ of utility in worlds $abc$ and $acb$, and $3$ in world $bac$. Finally, if it announces $abc$ or $acb$, it yields $3$ in worlds $abc$ and $bac$, and $5$ in world $acb$. 
    
    Agent~$1$ being optimistic will choose to announce $abc$ (choosing between $abc$, and $acb$ arbitrarily).
    
    \paragraph{Second step.} Agent~1 announces $abc$ and receives $\{a, c\}$, while agent~2 receives $b$. Agent~1 now infers that agent~2 prefers $b$ to $c$, i.e., $\mathrm{IS}_1(A^1) = \{abc, bac, bca\}$. The set of possible manipulations is $\{acb, bac, bca, cab, cba\}$.
    
    If $1$ announces $acb$, it yields $3$ of utility in each possible world. If $1$ announces $cab$ or $cba$, it yields $3$ of utility in worlds $bac$ and $bca$, and $4$ in world $abc$. Finally, if it announces $bac$ or $bca$, it yields $4$ in worlds $abc$ and $bac$, and $5$ in world $bca$. 
    
    Agent~$1$ being optimistic will choose to announce $bac$, and we are back to the initial profile.
\end{proof}

We give an example to prove Proposition~\ref{proposition: cycle expected}

\cycleexpected*

\begin{proof}
Consider an instance $\langle \agents, \items, \succ, v \rangle$ with $\agents = \{1,2\}$, $\items = \{a,b,c,d,e\}$, and $\succ = (edcab,edbac)$. Assume the utilities are:
    \begin{center}
    \begin{tabular}{c|ccccc}
        $v_1$ & $a:2$ & $b:1$ & $c:3$ & $d:4$ & $e:5$ \\ \hline
        $v_2$ & $a:2$ & $b:3$ & $c:1$ & $d:4$ & $e:5$
    \end{tabular}
    \end{center}
    
    The following execution of $\mathrm{IRR}$ enters a cycle:
    $$
    \begin{aligned}
    \{edcab, edbac\}[eca: 10, db: 7] 
        &\overset{1}{\rightarrow} \{decab, edbac\}[dca: 9, eb: 8] \\
        &\overset{1}{\rightarrow} \{edcab, edbac\}[eca: 10, db: 7]
    \end{aligned}
    $$
\end{proof}

Finally, we give a proof of Lemma~\ref{lemma: bluff IS}

\bluffis*

\begin{proof}
Let $\rho=(\rho_1,\ldots,\rho_m)$ denote the order in which items are
selected under $(\succ_i,\mathbbm{c})$. By definition, in
$\mathrm{bw}(\mathbbm{c})$, every agent $j\neq i$ reports the ranking
$\rho_1\cdots\rho_m$, while agent $i$ reports $\succ_i$. We prove by
induction that the same item is selected at every turn under
$(\succ_i,\mathbbm{c})$ and $(\succ_i,\mathrm{bw}(\mathbbm{c}))$.

At turn $1$, if agent $i$ is the picker, she reports $\succ_i$ in both
executions and therefore selects the same item, namely $\rho_1$. If some
agent $j\neq i$ is the picker, she selects $\rho_1$ under
$(\succ_i,\mathbbm{c})$ by definition of $\rho_1$. In
$\mathrm{bw}(\mathbbm{c})$, $\rho_1$ is her highest-ranked item, so she
selects it as well.

Suppose now that the same items $\rho_1,\ldots,\rho_\tau$ have been
selected in both executions, for some $\tau\geq1$. Hence, the same items
are available at turn $\tau+1$. If agent $i$ is the picker, she reports
$\succ_i$ in both executions and therefore selects the same item. If some
agent $j\neq i$ is the picker, she selects $\rho_{\tau+1}$ under
$(\succ_i,\mathbbm{c})$ by definition of $\rho_{\tau+1}$. In
$\mathrm{bw}(\mathbbm{c})$, since $\rho_1,\ldots,\rho_\tau$ have already
been selected, $\rho_{\tau+1}$ is the highest-ranked available item.
Hence, she also selects $\rho_{\tau+1}$.

Thus, by induction, the same item is selected at every turn in both
executions. Therefore, $\mathrm{RR}(\succ_i,\mathrm{bw}(\mathbbm{c}))
= \mathrm{RR}(\succ_i,\mathbbm{c}) = A$,
and hence $
\mathrm{bw}(\mathbbm{c})\in\mathcal{I}_i(\succ_i,A)$.
\end{proof}

\end{document}